\numberwithin{equation}{section}
\newtheorem{theorem}{\bf Theorem}[section]
\newtheorem{lemma}[theorem]{\bf Lemma}
\theoremstyle{remark}
\begin{document}

\title{The Rigid Body Dynamics in an ideal fluid: \\
Clebsch Top and Kummer surfaces}

\author{Jean-Pierre Fran\c{c}oise\thanks{Laboratoire Jacques-Louis Lions, 
Sorbonne-Universit\'{e}, 4 Pl. Jussieu, 75252 
Paris, France.  
E-mail: \texttt{jean-pierre.francoise@upmc.fr}}
\, and\,
Daisuke Tarama\thanks{Daisuke TARAMA, Department of Mathematical Sciences
Ritsumeikan University, 1-1-1 Nojihigashi, Kusatsu, Shiga, 525-8577, Japan
E-mail: \texttt{dtarama@fc.ritsumei.ac.jp}}}
\date{This version: \today \\ \vspace{5mm}
{\it Dedicated to Emma Previato}}

\maketitle
\noindent\textbf{Key words} Algebraic curves and their Jacobian varieties, Integrable Systems, Clebsch top, Kummer surfaces \\
\textbf{MSC(2010)}: \; 14D06, 37J35, 58K10, 58K50, 70E15 
\begin{center}Abstract\end{center}
This is an expository presentation of a completely integrable Hamiltonian system of Clebsch top under a special condition introduced by Weber. 
After a brief account on the geometric setting of the system, the structure of the Poisson commuting first integrals is discussed following the methods by Magri and Skrypnyk. 
Introducing supplementary coordinates, a geometric connection to Kummer surfaces, a typical class of K3 surfaces, is mentioned and also the system is linearized on the Jacobian of a hyperelliptic curve of genus two determined by the system. 
Further some special solutions contained in some vector subspace are discussed. 
Finally, an explicit computation of the action-angle coordinates is introduced. 

\section{Introduction}
This article provides expository explanations on a class of completely integrable Hamiltonian systems describing the rotational motion of a rigid body in an ideal fluid, called Clebsch top. 
Further, some new results are exhibited about the computation of action variables of the system. 

In analytical mechanics, the Hamiltonian systems of rigid bodies are typical fundamental problems, among which one can find completely integrable systems in the sense of Liouville, such as Euler, Lagrange, Kowalevski, Chaplygin, Clebsch, and Steklov tops. 
These completely integrable systems are studied from the view point of geometric mechanics, dynamical systems, (differential) topology, and algebraic geometry. 
A nice survey on the former four systems (Euler, Lagrange, Kowalevski, Chaplygin tops) of completely integrable heavy rigid body is \cite{audin_1996}. 
In particular, the authors have studied some elliptic fibrations arising from free rigid bodies, Euler tops, with connections to Birkhoff normal forms. 
(See \cite{naruki_tarama_2012,tarama_francoise_2014,francoise_tarama_2015}.)

As to Clebsch top, an integrable case of rigid bodies in an ideal fluid, classical studies about the integration of the system were done in \cite{kot}. 
In \cite{kot}, K\"{o}tter has given the integration of the flows on a certain Jacobian variety. 
One of the achievements of \cite{kot} was to identify the intersection of the quadrics with the manifold of lines tangent to two different confocal ellipsoids. 
Modern studies on this topic have been done in \cite{EFe, MaS}. 

Before \cite{kot}, Weber has also worked on the same system under a special condition about the parameters (\cite{web2}), as is also explained briefly in \cite{aom}. 
The same type of integration method was applied for Frahm-Manakov top in \cite{schottky_1891} by Schottky. 
Modern studies on such a type of systems can be found in \cite{admo84,admo87,haine}. 

One of the interesting aspects of Clebsch top under Weber's condition is its relation to Kummer surfaces, a typical class of K3 surfaces. 
In complex algebraic geometry, it is known that Kummer surface is characterized as a quartic surface with the maximal number (i.e. $16$) of double points. 
It is usually defined as the quotient of an Abelian surface by a special involution. 
(See \cite{hudson}.) 
Recall that describing the Abelian surface as the quotient $\mathbb{C}^2/\Lambda$ of $\mathbb{C}^2$ by a lattice $\Lambda$, group-theoretically isomorphic to $\mathbb{Z}^4$, the involution is induced from the one of $\mathbb{C}^2$ defined through $z\mapsto -z$ for all $z\in\mathbb{C}^2$.
(See \cite{barth-hulek-peters-vandeven}.)
As to the real aspect of Kummer surfaces, one can look at the beautiful book \cite{fischer} edited by G. Fischer.  

In the case of Clebsch top, the Abelian surface appears as the Jacobian variety of a hyperelliptic curve of genus two. 
The equations of motion for Clebsch top allow to determine this hyperelliptic curve and the Hamiltonian flow is linearized on the Jacobian variety. 

The rigid body in an ideal fluid, described by Kirchhoff equations, is also studied in \cite{HJL} from the viewpoint of dynamical systems theory. 
In \cite{HJL}, there are given some interesting special solutions which seem similar to the ones appearing in the case of free rigid body dynamics for Euler top. 

An important problem about completely integrable systems in the sense of Liouville is to describe the action-angle coordinates whose existence is guaranteed by Liouville-Arnol'd(-Mineur-Jost) Theorem (cf. \cite{arnold_1989}). 
The problem is also closely related to Birkhoff normal forms, which was studied by the authors in the cases of Euler top (\cite{tarama_francoise_2014,francoise_tarama_2015}). 

\medskip

In the present paper, the geometric setting of Kirchhoff equations is explained and the complete integrability is considered following the methods by \cite{MaS}. 
The connection to Kummer surface is observed directly, but, introducing two supplementary coordinates, the Kummer surface is also connected to the Jacobian variety of a hyperelliptic curve of genus two, with which the integration of the system is carried out. 
Two important aspects of the system from the viewpoint of dynamical systems theory are also discussed: 
some special solutions contained in invariant vector subspaces of dimension three and the computation of action-angle coordinates.

It should be pointed out that there are many studies on the relation to integrable systems and K3 surfaces from general point of view as can be found e.g. in \cite{beauville, markushevich, mukai}. 
However, such connection can be found more concretely in the case of Clebsch top under Weber's condition. 
In this sense, the present paper has the same type of intention as \cite{naruki_tarama_2011}, where the relation between Euler top and Kummer surface is discussed. 

\medskip

The structure of the present paper is as follows: \\
Section 2 gives a brief explanation on Kirchhoff equations and Clebsch condition. 
Kirchhoff equations are Hamilton's equations with respect to Lie-Poisson bracket of $\mathfrak{se}(3)^{\ast}=\left(\mathfrak{so}(3)\ltimes \mathbb{R}^3\right)^{\ast}\equiv \mathbb{R}^3\times \mathbb{R}^3=\mathbb{R}^6$, which has two quadratic Casimir functions $C_1$ and $C_2$. 
In fact, Kirchhoff equations can be restricted to the symplectic leaf, defined as the intersection of level hypersurfaces of $C_1$ and $C_2$. 
In the Clebsch case, there exists an additional constants of motion other than the Hamiltonian and hence the restricted system is completely integrable in the sense of Liouville with two degrees of freedom. 
In the present paper, we mainly focus on the case of Weber where $C_1=0$ and $C_2=1$, with which he was able to integrate Kirchhoff equations by using his methods of theta functions in \cite{web1}. 
Note that the condition $C_1=0$ is a strong simplification, while $C_2=1$ can be assumed without loss of generality because it is only a convenient scaling. 

In Section 3, a change of parameters for Kirchhoff equation in Clebsch case is introduced following the methods of \cite{MaS}, which allows to describe symmetrically the two first integrals other than the Casimir functions $C_1$ and $C_2$, as is observed through the functions $C_3$ and $C_4$. 
The intersection of the four quadrics $C_1=0, C_2=1, C_3=c_3,C_4=c_4$ defines, generically, naturally an eightfold covering of a Kummer's quartic surface. 
Further, one introduces the two supplementary coordinates $(x_1, x_2)$ with which the coordinates of the intersection of the four quadrics $C_1=0, C_2=1, C_3=c_3,C_4=c_4$ are clearly described. 
These coordinates are also useful to linearize the Hamiltonian flows. 

In section 4, a family of invariant subspaces of dimension 3 on which solutions are elliptic curves is discussed (\cite{HJL}). 
Such solutions are contained in the discriminant locus of the family of hyperelliptic curves. 
We miss the special solution with $\bm{p}=0$ where the motion is equivalent to the Euler case of the rigid body, since this is incompatible with Weber's condition $C_2=1$.

Section 5 provides an explicit computation of the actions following the method of \cite{Fran}. 
In Section 6, the conclusion and the future perspectives are mentioned.

\section{Kirchhoff equations and Clebsch condition}
Following \cite{HJL}, we consider Kirchhoff equations 
\begin{equation}\label{KE_K}
\begin{cases}
\displaystyle\frac{\displaystyle\mathsf{d}K_1}{\displaystyle\mathsf{d}t}&=\displaystyle\left(\frac{1}{I_3}-\frac{1}{I_2}\right)K_2K_3+\left(\frac{1}{m_3}-\frac{1}{m_2}\right)p_2p_3, \vspace{1mm}\\
\displaystyle\frac{\displaystyle\mathsf{d}K_2}{\displaystyle\mathsf{d}t}&=\displaystyle\left(\frac{1}{I_1}-\frac{1}{I_3}\right)K_3K_1+\left(\frac{1}{m_1}-\frac{1}{m_3}\right)p_3p_1, \vspace{1mm}\\
\displaystyle\frac{\displaystyle\mathsf{d}K_3}{\displaystyle\mathsf{d}t}&=\displaystyle\left(\frac{1}{I_2}-\frac{1}{I_1}\right)K_1K_2+\left(\frac{1}{m_2}-\frac{1}{m_1}\right)p_1p_2, \vspace{1mm}\\
\end{cases}\\
\end{equation}
\begin{equation}\label{KE_p}
\begin{cases}
\displaystyle\frac{\displaystyle\mathsf{d}p_1}{\displaystyle\mathsf{d}t}&=\displaystyle\frac{1}{I_3}p_2K_3-\frac{1}{I_2}p_3K_2, \vspace{1mm}\\
\displaystyle\frac{\displaystyle\mathsf{d}p_2}{\displaystyle\mathsf{d}t}&=\displaystyle\frac{1}{I_1}p_3K_1-\frac{1}{I_3}p_1K_3, \vspace{1mm}\\
\displaystyle\frac{\displaystyle\mathsf{d}p_3}{\displaystyle\mathsf{d}t}&=\displaystyle\frac{1}{I_2}p_1K_2-\frac{1}{I_1}p_2K_1, \vspace{1mm}\\
\end{cases}
\end{equation}
which describe the rotational motion of a ellipsoidal rigid body in an ideal fluid, where the center of buoyancy and that of gravity for the rigid body are assumed to coincide. 
In \eqref{KE_K}, \eqref{KE_p}, $\left(\bm{K}, \bm{p}\right)=\left(K_1, K_2, K_3, p_1, p_2, p_3\right)\in \mathbb{R}^3\times \mathbb{R}^3\equiv \mathbb{R}^6$ and $I_1, I_2, I_3, m_1, m_2, m_3\in\mathbb{R}$ are parameters of the dynamics, which do not depend on the time $t$. 
Kirchhoff equations \eqref{KE_K}, \eqref{KE_p} are Hamilton's equation for the Hamiltonian $H\left(\bm{K}, \bm{p}\right)=\displaystyle \frac{1}{2}\left(\sum_{\alpha=1}^3\frac{K_{\alpha}^2}{I_{\alpha}}+\sum_{\alpha=1}^3\frac{p_{\alpha}^2}{m_{\alpha}}\right)$ with respect to Lie-Poisson bracket 
\[
\left\{F, G\right\}\left(\bm{K}, \bm{p}\right)
=\left\langle \bm{K}, \nabla_{\bm{K}}F\times \nabla_{\bm{K}}G\right\rangle
+\left\langle \bm{p}, \nabla_{\bm{K}} F\times \nabla_{\bm{p}}G- \nabla_{\bm{K}} G\times \nabla_{\bm{p}}F\right\rangle
\]
on $\mathfrak{se}(3)^{\ast}=\left(\mathfrak{so}(3)\ltimes \mathbb{R}^3\right)^{\ast}\equiv \mathbb{R}^3\times \mathbb{R}^3=\mathbb{R}^6$, where $F, G$ are differentiable functions on $\mathbb{R}^6$, $\langle \cdot, \cdot \rangle$ is the standard Euclidean inner product on $\mathbb{R}^3$, and $\nabla_{\bm{p}} F=\displaystyle \left(\frac{\partial F}{\partial p_1}, \frac{\partial F}{\partial p_2}, \frac{\partial F}{\partial p_3}\right)$, $\nabla_{\bm{K}} F=\displaystyle \left(\frac{\partial F}{\partial K_1}, \frac{\partial F}{\partial K_2}, \frac{\partial F}{\partial K_3}\right)$. 
Note that the Hamiltonian vector field $\Xi_F$ for an arbitrary Hamiltonian $F$ is defined through $\Xi_F\left[G\right]=\left\{F, G\right\}$ for all differentiable functions $G$ on $\mathbb{R}^6$ and can be written as 
\[
\left(\Xi_F\right)_{\left(\bm{K},\bm{p}\right)}=\left(\bm{K}\times \nabla_{\bm{K}}F+\bm{p}\times \nabla_{\bm{p}}F, \bm{p}\times \nabla_{\bm{K}}F\right). 
\]
The two functions $C_1\left(\bm{K}, \bm{p}\right)=\left\langle \bm{K}, \bm{p}\right\rangle=\displaystyle K_1p_1+K_2p_2+K_3p_3$, $C_2\left(\bm{K}, \bm{p}\right)=\left\langle \bm{p}, \bm{p}\right\rangle= p_1^2+p_2^2+p_3^2$ are Casimir functions, i.e. $\left\{C_1, G\right\}\equiv 0$, $\left\{C_2, G\right\}\equiv 0$ for an arbitrary differentiable function $G$ on $\mathbb{R}^6$ and hence they are constant of motion, as well as the Hamiltonian $H$. 

\medskip

We now focus on Clebsch case where the parameters $I_1, I_2, I_3, m_1, m_2, m_3$ satisfy 
\begin{align}\label{clebsch_condition}
&m_1I_1\left(m_2-m_3\right)+m_2I_2\left(m_3-m_1\right)+m_3I_3\left(m_1-m_2\right)= 0, \notag \\
\Longleftrightarrow& \frac{I_2-I_3}{m_1}+\frac{I_3-I_1}{m_2}+\frac{I_1-I_2}{m_3}=0, \\
\Longleftrightarrow& \exists \nu, \nu^{\prime}\in\mathbb{R}\; \text{s.t.}\; \forall \alpha=1,2,3, \; \frac{1}{m_{\alpha}}=\nu+\frac{\nu^{\prime} I_{\alpha}}{I_1I_2I_3}. \notag 
\end{align}
Under this condition, the system \eqref{KE_p}, \eqref{KE_p} admits the fourth constant of motion 
\[
L\left(\bm{K},\bm{p}\right)=-\left(\frac{p_1^2}{I_2I_3}+\frac{p_2^2}{I_3I_1}+\frac{p_3^2}{I_1I_2}\right)+\frac{K_1^2}{m_1I_1}+\frac{K_2^2}{m_2I_2}+\frac{K_3^2}{m_3I_3}.
\]
The four constants of motion $H$, $L$, $C_1$, $C_2$ Poisson commute and hence they define a completely integrable system in the sense of Liouville on each generic coadjoint orbit (symplectic leaf) in $\mathfrak{se}(3)^{\ast}\equiv \mathbb{R}^6$ defined as the intersection of the level hypersurfaces of $C_1$ and $C_2$. 
(See \cite{arnold_1989, marsden-ratiu} for the generalities of completely integrable systems in the sense of Liouville and of Lie-Poisson brackets, although some of their notations do not coincide with ours.)

It is known that the coadjoint orbit $V^4$ defined as the common level set of $C_1$ and $C_2$ is diffeomorphic either to the two-dimensional sphere $S^2$ or to its cotangent bundle $T^{\ast}S^2$. 
In the following, we assume that $C_1=0$ and $C_2=1$, in which case the coadjoint orbit is symplectomorphic to the cotangent bundle $T^{\ast}S^2$ of the two-dimensional unit sphere. 
(See e.g. \cite[\S 7.5]{ratiu-et-al}.) 
On the coadjoint orbit $V^4$ equipped with the orbit symplectic form, we can define the ``momentum mapping'' $J: V^4\ni\left(\bm{K},\bm{p}\right)\mapsto \left(H\left(\bm{K},\bm{p}\right), L\left(\bm{K},\bm{p}\right)\right)\in\mathbb{R}^2$. 
(Later, we use another description of first integrals $C_3$ and $C_4$, but the ``momentum mapping'' $J$ is equivalent to the one $J^{\prime}$ defined as $J^{\prime}\left(\bm{K},\bm{p}\right):=\left(C_3\left(\bm{K},\bm{p}\right), C_4\left(\bm{K},\bm{p}\right)\right)$, since $C_3$ and $C_4$ are linear transformations of $H$ and $L$.) 

Clearly, the level hypersurfaces of the function $H$ are ellipsoids and hence it is compact. 
This means that the fiber $J^{-1}(c)$ of the ``momentum mapping'' $J$ for the value $c\in\mathbb{R}^2$ is compact and hence we are able to use Liouville-Arnol'd(-Mineur-Jost) Theorem \cite{arnold_1989} which implies that regular fibers are (finite disjoint union of) real two-dimensional torus (tori) and around these regular fibers we can take the action-angle coordinates. 

\medskip

In what follows, we mainly work on the complexified system except for Section 5. 
The variables and parameters of Kirchhoff equations are naturally extended to a complex analytic ordinary differential equations on $\mathbb{C}^6$ from the original ones on $\mathbb{R}^6$ and the real quadratic polynomial first integrals $C_1$, $C_2$, $H$, $L$ are also automatically extended to complex quadratic polynomials.  

\section{The algebraic linearization}
In this section, we describe the linearization of the Kirchhoff equations \eqref{KE_p}, \eqref{KE_K} with Clebsch condition \eqref{clebsch_condition}.

Following the lines of the presentation of \cite{MaS}, we consider the intersections of the four quadrics defined as the level hypersurfaces of quadratic functions  
\begin{align}\label{four_first_int}
\begin{cases}
C_1=\displaystyle \sum_{\alpha=1}^3 K_\alpha p_\alpha, \\
C_2=\displaystyle \sum_{\alpha=1}^3 p_{\alpha}^2, \\
C_3=\displaystyle \sum_{\alpha=1}^3 \left\{K_{\alpha}^2+(j_1+j_2+j_3-j_\alpha)p_\alpha^2\right\}, \\
C_4=\displaystyle \sum_{\alpha=1}^3 \left(j_\alpha K_{\alpha}^2+\frac{j_1j_2j_3}{j_\alpha}p_\alpha^2\right)
\end{cases}
\end{align}
in $\left(K_1, K_2, K_3, p_1, p_2, p_3\right)\in \mathbb{C}^6$. 
In what follows,  the parameters $j_1, j_2, j_3\in\mathbb{C}$ are fixed to recover Kirchhoff equations \eqref{KE_K}, \eqref{KE_p}. 
\begin{lemma}
\upshape 
The four functions $C_1, C_2, C_3, C_4$ Poisson commute with respect to Lie-Poisson bracket $\left\{\cdot, \cdot\right\}$. 
\end{lemma}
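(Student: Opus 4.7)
The plan is to first reduce the six pairs $\{C_i, C_j\}$ with $i<j$ to the single nontrivial case $\{C_3, C_4\}$. Since $C_1 = \langle \bm{K},\bm{p}\rangle$ and $C_2 = \langle \bm{p},\bm{p}\rangle$ were already observed in Section 2 to be Casimirs of the Lie-Poisson bracket on $\mathfrak{se}(3)^{\ast}$, one has $\{C_1, F\} = \{C_2, F\}=0$ for every smooth $F$, which disposes of five of the six pairs for free.

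For $\{C_3, C_4\}$ itself I would plug directly into the explicit formula for the Hamiltonian vector field recalled in the excerpt,
\[
\Xi_{C_3} = \bigl(\bm{K}\times \nabla_{\bm{K}} C_3 + \bm{p}\times \nabla_{\bm{p}} C_3,\;\bm{p}\times \nabla_{\bm{K}} C_3\bigr),
\]
so that $\{C_3, C_4\}=\Xi_{C_3}[C_4]$ becomes a sum of scalar triple products of the gradients
\[
\nabla_{\bm{K}} C_3 = 2\bm{K}, \qquad \nabla_{\bm{p}} C_3 = 2\bigl((j_2+j_3)p_1,\,(j_3+j_1)p_2,\,(j_1+j_2)p_3\bigr),
\]
\[
\nabla_{\bm{K}} C_4 = 2(j_1 K_1, j_2 K_2, j_3 K_3), \qquad \nabla_{\bm{p}} C_4 = 2(j_2 j_3 p_1, j_1 j_3 p_2, j_1 j_2 p_3).
\]
The pure-$\bm{K}$ contribution $\langle \bm{K},\nabla_{\bm{K}} C_3\times \nabla_{\bm{K}} C_4\rangle$ equals $4 K_1 K_2 K_3\bigl[(j_3-j_2)+(j_1-j_3)+(j_2-j_1)\bigr]=0$ by cyclic telescoping, and I expect the remaining $\bm{p}$-terms to collapse pairwise thanks to the symmetric appearance of the $j_\alpha$ in $\nabla_{\bm{p}} C_3$ and $\nabla_{\bm{p}} C_4$; concretely, one should see that the coefficients of each $K_\alpha p_\beta p_\gamma$ monomial sum to the antisymmetric combination $-j_\alpha^{2} + j_\alpha^{2} = 0$ after collecting terms from $\langle \bm{p},\nabla_{\bm{K}}C_3\times \nabla_{\bm{p}}C_4 - \nabla_{\bm{K}}C_4\times \nabla_{\bm{p}}C_3\rangle$.

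A more structural route would be to place $C_3$ and $C_4$ inside the Magri-Skrypnyk pencil alluded to in the introduction of this section: one writes a one-parameter family $C(\lambda)$ of quadratic functions whose coefficients in $\lambda$ recover $C_1, C_2, C_3, C_4$, and derives $\{C(\lambda), C(\mu)\}=0$ for all $\lambda, \mu$ from an $r$-matrix (or bi-Hamiltonian) identity; the commutation of the coefficients then follows by polarization in $\lambda, \mu$. The main obstacle in the direct computation is the bookkeeping rather than any conceptual difficulty, since several cubic and quartic monomials in $K_\alpha, p_\alpha$ appear and the cancellation rests on the cyclic/antisymmetric structure in the $j_\alpha$, where sign errors are most likely to creep in; the pencil viewpoint sidesteps this at the cost of setting up the appropriate spectral framework.
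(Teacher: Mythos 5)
Your proposal is correct and follows essentially the same route as the paper, which simply recalls that $C_1$ and $C_2$ are Casimirs (so five of the six brackets vanish trivially) and asserts that $\left\{C_3, C_4\right\}=0$ by straightforward computation. Your explicit gradient calculation fills in exactly that computation, and the cancellations you predict (the cyclic telescoping of the pure-$\bm{K}$ term and the $-j_\alpha^2+j_\alpha^2$ coefficient of each $K_\alpha p_\beta p_\gamma$ monomial) do check out.
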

This is a result of straightforward computations of $\left\{C_3, C_4\right\}=0$. 
Recall that $C_1$ and $C_2$ are Casimir functions for Lie-Poisson bracket $\left\{\cdot, \cdot\right\}$. 

We take the Hamiltonian 
\[
\lambda C_3+\lambda^{\prime} C_4 =\sum_{\alpha=1}^3\left(n_{\alpha} K_{\alpha}^2+n_{\alpha}^{\prime}p_{\alpha}^2\right), 
\]
a member of the linear pencil generated by $C_3$ and $C_4$, as well as its Hamiltonian vector field $\Xi_{\lambda C_3+\lambda^{\prime} C_4}=\lambda \Xi_{C_3}+\lambda^{\prime}\Xi_{C_4}$ with respect to Lie-Poisson bracket $\left\{\cdot, \cdot\right\}$. 
Here, we put 
\[
n_{\alpha}=\lambda +\lambda^{\prime} j_{\alpha},\qquad 
n_{\alpha}^{\prime}=\lambda \left(j_1+j_2+j_3-j_{\alpha}\right)+\lambda^{\prime}\frac{j_1j_2j_3}{j_{\alpha}}.
\]
As is pointed out in \cite{MaS}, we recover the original Kirchhoff equations \eqref{KE_K}, \eqref{KE_p}, when we choose the parameters $\lambda, \lambda^{\prime}$ correctly. 
In this case, we have $n_{\alpha}=\displaystyle \frac{1}{2I_{\alpha}}$ and $n_{\alpha}^{\prime}=\displaystyle \frac{1}{2m_{\alpha}}$. 

The family of commuting Hamiltonians leaves invariant the level sets of the functions $C_1, C_2, C_3, C_4$, which are the symplectic leaves of Poisson space $\left(\mathbb{C}^6, \left\{\cdot, \cdot\right\}\right)$

\subsection{Intersection of the three quadrics $C_2=1, C_3=c_3, C_4=c_4$}
We describe the intersection of the three quadrics $C_2=1, C_3=c_3, C_4=c_4$. 
We first observe that the three equations are linear in $p_1^2,p_2^2,p_3^2$. 
We introduce the matrix 
\begin{equation*}
A=
\begin{pmatrix}
1 & 1 & 1\\
j_2+j_3 & j_1+j_3 & j_1+j_2\\
j_2j_3 & j_3j_1 & j_1j_2\\
\end{pmatrix}
\end{equation*}
and write the conditions of intersection of the three quadrics as:
\begin{equation}\label{linear_equation}
\begin{pmatrix}
1 & 1 & 1\\
j_2+j_3 & j_1+j_3 & j_1+j_2\\
j_2j_3 & j_3j_1 & j_1j_2\\
\end{pmatrix}
\begin{pmatrix}
p_1^2\\
p_2^2\\
p_3^2\\
\end{pmatrix}
=
\begin{pmatrix}
1\\
-(K_1^2+K_2^2+K_3^2)+c_3\\
-(j_1K_1^2+j_2K_2^2+j_3K_3^2)+c_4\\
\end{pmatrix}.
\end{equation}

We further assume the order $j_1<j_2<j_3$ of the parameters and the generic condition
\begin{equation*}
\det A=j_1j_2(j_1-j_2)+j_2j_3(j_2-j_3)+j_1j_3(j_3-j_1)=-(j_1-j_2)(j_2-j_3)(j_3-j_1)\neq 0.
\end{equation*}
Under this condition, we have 
\begin{equation}
A^{-1}=-\frac{1}{\Delta}
\begin{pmatrix}
j_1^2(j_2-j_3) & j_1(j_3-j_2) & j_2-j_3\\
j_2^2(j_3-j_1) & j_2(j_1-j_3) & j_3-j_1\\
j_3^2(j_1-j_2) & j_3(j_2-j_1) & j_1-j_2\\
\end{pmatrix},
\end{equation}
where we set $\Delta:=(j_1-j_2)(j_2-j_3)(j_3-j_1)$, 
and hence the solution to \eqref{linear_equation} is written as 
\begin{align}\label{key}
p_1^2&=l-\frac{1}{\Delta}\left\{(j_1-j_2)(j_2-j_3)K_2^2-(j_2-j_3)(j_3-j_1)K_3^2\right\}, \notag \\
p_2^2&=m-\frac{1}{\Delta}\left\{(j_2-j_3)(j_3-j_1)K_3^2-(j_3-j_1)(j_1-j_2)K_1^2\right\}, \\
p_3^2&=n-\frac{1}{\Delta}\left\{(j_3-j_1)(j_1-j_2)K_1^2-(j_1-j_2)(j_2-j_3)K_2^2\right\}, \notag 
\end{align}
where $(l,m,n)$ are functions of the parameters $(j_1,j_2,j_3$) and of the $(c_3,c_4)$, such that $l+m+n=1$. 

\subsection{The associated Kummer surface}

From the equation and the vanishing condition of the first Casimir, we get 
\begin{align}\label{three_squares}
&\sqrt{l-\frac{1}{\Delta}\left\{(j_1-j_2)(j_2-j_3)K_2^2-(j_2-j_3)(j_3-j_1)K_3^2\right\}}K_1 \notag \\
&\; +\sqrt{m-\frac{1}{\Delta}\left\{(j_2-j_3)(j_3-j_1)K_3^2-(j_3-j_1)(j_1-j_2)K_1^2\right\}}K_2\\
&\;\;+\sqrt{n-\frac{1}{\Delta}\left\{(j_3-j_1)(j_1-j_2)K_1^2-(j_1-j_2)(j_2-j_3)K_2^2\right\}}K_3=0. \notag 
\end{align}
This equation represents an eight-fold covering of a Kummer's quartic surface \cite[\S 54]{hudson} and \cite[p.341]{web2}, given by the correspondence $\left(K_1, K_2, K_3\right)\mapsto \left(K_1^2, K_2^2, K_3^2\right)$. 
We homogenize \eqref{three_squares} by setting $\displaystyle \left(K_1^2=\frac{X_1}{X_4}, K_2^2=\frac{X_2}{X_4}, K_3^2=\frac{X_3}{X_4}\right)$ and by using the parameters 
\begin{align*}
d_1&=-\frac{1}{\Delta}(j_3-j_1)(j_1-j_2)=\frac{1}{j_3-j_2},\\
d_2&=-\frac{1}{\Delta}(j_1-j_2)(j_2-j_3)=\frac{1}{j_1-j_3},\\
d_3&=-\frac{1}{\Delta}(j_2-j_3)(j_3-j_1)=\frac{1}{j_2-j_1}.
\end{align*}
As a result, we have 
\begin{align}\label{define surface}
X_1^2(lX_4+d_2X_2-d_3X_3)^2&+X_2^2(mX_4+d_3X_3-d_1X_1)^2+X_3^2(nX_4+d_1X_1-d_2X_2)^2 \notag \\
&-2X_1X_2(lX_4+d_2X_2-d_3X_3)(mX_4+d_3X_3-d_1X_1) \notag \\
&-2X_1X_3(lX_4+d_2X_2-d_3X_3)(nX_4+d_1X_1-d_2X_2) \notag \\
&-2X_2X_3(mX_4+d_3X_3-d_1X_1)(nX_4+d_1X_1-d_2X_2)=0, 
\end{align}
where $\left(X_1:X_2:X_3:X_4\right)$ are regarded as homogeneous coordinates in $\mathbb{CP}^3$. 
Note that we have the relation $\displaystyle \frac{1}{d_1}+\frac{1}{d_2}+\frac{1}{d_3}=0$. 
As is described in \cite[\S 54]{hudson}, the quartic surface defined through \eqref{three_squares} can be shown to be a Kummer surface. 
Here, we give a brief justification to this. 
\begin{theorem}
The quartic surface defined through \eqref{define surface} in $\mathbb{CP}^3$ has 16 double points and hence it is a Kummer surface. 
\end{theorem}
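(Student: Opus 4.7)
The plan is to verify that the quartic surface (\ref{define surface}) has exactly $16$ ordinary double points and then invoke the classical characterization (Hudson \cite{hudson}): an irreducible quartic surface in $\mathbb{CP}^3$ with $16$ nodes is a Kummer surface. Writing $F$ for the left-hand side of (\ref{define surface}) and abbreviating $u_\alpha = X_\alpha P_\alpha$ with
\[
P_1 = lX_4 + d_2X_2 - d_3X_3,\quad P_2 = mX_4 + d_3X_3 - d_1X_1,\quad P_3 = nX_4 + d_1X_1 - d_2X_2,
\]
one sees that $F = \sum_\alpha u_\alpha^2 - 2\sum_{\alpha<\beta} u_\alpha u_\beta$, and the elementary identity $\sum_\alpha u_\alpha^2 - 2\sum_{\alpha<\beta} u_\alpha u_\beta = (u_1-u_2-u_3)^2 - 4u_2u_3$ (together with its cyclic analogues) shows that $F$ restricts to a perfect square on each of the six planes $\{X_\alpha = 0\}$ and $\{P_\alpha = 0\}$. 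These six planes are therefore tropes: the surface is tangent to each of them along a conic.

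Given this initial trope configuration, I would look for the nodes as points where the surface meets several tropes simultaneously. Any point on the intersection of two of these six planes lies on the surface, and it becomes a singular point precisely when the two relevant trope conics pass through it. This produces a first batch of singular points at intersections of type $\{X_\alpha = X_\beta = 0\}$, $\{X_\alpha = 0,\ P_\beta = 0\}$, and $\{P_\alpha = P_\beta = 0\}$, where the relations $1/d_1 + 1/d_2 + 1/d_3 = 0$ and $l+m+n=1$ control the count. At each candidate point I would then verify that the Hessian of $F$ has rank $3$, so that the projective tangent cone is a non-degenerate quadric in $\mathbb{CP}^2$ and the singularity is an ordinary double point rather than a cusp or a worse singularity.

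The main obstacle is to arrive at the full count of $16$ nodes: the six explicit tropes account for only part of the $16_6$ Kummer incidence pattern, and the remaining tropes and the nodes incident only to them must be extracted by solving auxiliary linear systems in the parameters $(l,m,n,d_1,d_2,d_3)$, which is bookkeeping-heavy but routine. A conceptually cleaner alternative, which I would sketch in parallel, is to invoke the result established in the next subsection of the paper: the intersection of the four quadrics $C_1=0$, $C_2=1$, $C_3=c_3$, $C_4=c_4$ in $\mathbb{C}^6$ is, for generic $(c_3,c_4)$, an Abelian surface $A$ (the Jacobian of a hyperelliptic curve of genus two). The squaring map $(K_1,K_2,K_3)\mapsto (K_1^2:K_2^2:K_3^2:1)$ is invariant under the discrete sign-change symmetries preserving $(C_1,C_2,C_3,C_4)$, and in particular factors through the involution $(\bm{K},\bm{p})\mapsto (-\bm{K},-\bm{p})$; one recognises the image as $A/\{\pm 1\}$, the standard Kummer surface of $A$, whose $16$ nodes are the images of the $16$ two-torsion points of the Jacobian.
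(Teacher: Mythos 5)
Your first route is essentially the paper's own proof. The paper sets $U_1=lX_4+d_2X_2-d_3X_3$, $U_2=mX_4+d_3X_3-d_1X_1$, $U_3=nX_4+d_1X_1-d_2X_2$ (your $P_\alpha$), rewrites \eqref{define surface} as \eqref{equation_with_u}, and uses precisely your completing-the-square identity in the form $\left(X_1U_1-X_2U_2\right)^2-X_3U_3\left(2X_1U_1+2X_2U_2-X_3U_3\right)=0$ to exhibit a rank-three tangent cone $z_1^2+z_2z_3=0$ at each candidate point. The candidates are exactly the ones your trope analysis produces: the four coordinate vertices, two points on each line $X_i=U_i=0$, the three points $X_\alpha=U_\beta=U_\gamma=0$, and the point $U_1=U_2=U_3=0$ --- fourteen nodes in all, with the last two obtained only by reference to Hudson \S 55, and the total capped at $16$ by the classical bound of Kummer and Nikulin. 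So the ``main obstacle'' you honestly flag is present in the paper as well and is discharged there only by citation; you are not missing anything the paper supplies. One genuine error to correct: the claim that \emph{any} point on the intersection of two of the six planes lies on the surface is false. On the line $X_1=X_2=0$ the quartic restricts to $X_3^2U_3^2$, so the surface meets that line only at the two points $X_3=0$ and $U_3=0$ (each doubled); similarly on $X_1=U_1=0$ the restriction is $\left(X_2U_2-X_3U_3\right)^2$, cutting out two points. The correct statement, which is what the paper exploits, is that the surface meets each such line in at most two points and each of these is singular.

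Your alternative route via the Abelian surface is genuinely different from the paper and conceptually cleaner, but as sketched it has two gaps. First, it presupposes that the projective closure of the intersection $C_1=0$, $C_2=1$, $C_3=c_3$, $C_4=c_4$ is, for generic $(c_3,c_4)$, a smooth Abelian surface isomorphic to the Jacobian; the paper's next subsection only linearizes the flow on that Jacobian and never establishes this compactification statement (it is a theorem of Adler--van Moerbeke/Haine type, not a triviality). Second, the squaring map $\left(K_1,K_2,K_3\right)\mapsto\left(K_1^2:K_2^2:K_3^2:1\right)$ is generically $8$-to-$1$ on the intersection of the quadrics (the paper calls \eqref{three_squares} an eight-fold covering), since all sign changes $\left(K_\alpha,p_\alpha\right)\mapsto\left(\epsilon_\alpha K_\alpha,\epsilon_\alpha p_\alpha\right)$ preserve the four quadrics. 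The quotient is therefore by a group of order $8$ generated by translations by half-periods together with the inversion, not by $\{\pm1\}$ alone, and identifying the image with the Kummer surface of the Jacobian (rather than of some isogenous Abelian surface) requires the additional classical argument of Weber and Hudson. Either gap can be filled, but neither is routine, so this route should not be presented as a shortcut.
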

\begin{proof}
The maximal number of double points in $\mathbb{CP}^3$ is 16 (\cite{kummer,nikulin_1975}). 
Thus, we only need to count the number of double points for the surface defined through \eqref{define surface}. 
Setting $U_1=lX_4+d_2X_2-d_3X_3$, $U_2=mX_4+d_3X_3-d_1X_1$, $U_3=nX_4+d_1X_1-d_2X_2$, we can write down \eqref{define surface} as 
\begin{equation}\label{equation_with_u}
X_1^2U_1^2+X_2^2U_2^2+X_3^2U_3^2-2X_1X_2U_1U_2-2X_2X_3U_2U_3-2X_3X_1U_3U_1=0.
\end{equation}
The singular points of the surface appear at the following 14 points:\\
(1) $\left(X_1:X_2:X_3:X_4\right)=\left(1:0:0:0\right), \left(0:1:0:0\right), \left(0:0:1:0\right), \left(0:0:0:1\right)$; (2) two points defined by $X_i=U_i=0$, $i=1,2,3$; (3) $X_{\alpha}=U_{\beta}=U_{\gamma}=0$, $\alpha=1,2,3$, $\{\alpha, \beta,\gamma\}=\{1,2,3\}$; and (4) $U_1=U_2=U_3=0$. 

We find double points of the surface on each of these lines. 
Recall that a normal form of double points is given as $z_1=z_2=z_3=0$ with the equation $z_1^2+z_2z_3=0$. 
(See \cite{barth-hulek-peters-vandeven}.)
\begin{enumerate}
\item[(1)] Consider the point $\left(X_1: X_2: X_3:X_4\right)=\left(0:0:0:1\right)$, where the linear functions $U_1/X_4$, $U_2/X_4$, $U_3/X_4$ are non-zero. 
These functions are regarded as units in the local ring the polynomials. 
The equation \eqref{equation_with_u} of the surface is written as 
\begin{equation}\label{eq_transform_kummer_1}
\left(\frac{U_1}{X_4}\frac{X_1}{X_4}-\frac{U_2}{X_4}\frac{X_2}{X_4}\right)^2-\frac{U_3}{X_4}\frac{X_3}{X_4}\left(2\frac{U_1}{X_4}\frac{X_1}{X_4}+2\frac{U_2}{X_4}\frac{X_2}{X_4}-\frac{U_3}{X_4}\frac{X_3}{X_4}\right)=0,
\end{equation}
which apparently has a double point at $\displaystyle \frac{X_1}{X_4}=\frac{X_2}{X_4}=\frac{X_3}{X_4}=0$ since three polynomials $\displaystyle \frac{U_1}{X_4}\frac{X_1}{X_4}-\frac{U_2}{X_4}\frac{X_2}{X_4}$, $\displaystyle \frac{U_3}{X_4}\frac{X_3}{X_4}$, $\displaystyle 2\frac{U_1}{X_4}\frac{X_1}{X_4}+2\frac{U_2}{X_4}\frac{X_2}{X_4}-\frac{U_3}{X_4}\frac{X_3}{X_4}$ are of weight one. 

Concerning the point $\left(X_1: X_2: X_3:X_4\right)=\left(0:0:1:0\right)$, we see that $U_1\neq 0$, $U_2\neq 0$, $U_3=0$ and hence the polynomials $\displaystyle \frac{U_1}{X_3}\frac{X_1}{X_3}-\frac{U_2}{X_3}\frac{X_2}{X_3}$, $U_3/X_3$, $\displaystyle 2\frac{U_1}{X_3}\frac{X_1}{X_3}+2\frac{U_2}{X_3}\frac{X_2}{X_3}-\frac{U_3}{X_3}$ are of weight one. 
Thus, the transformation 
\begin{equation}\label{eq_transform_kummer_2}
\left(\frac{U_1}{X_3}\frac{X_1}{X_3}-\frac{U_2}{X_3}\frac{X_2}{X_3}\right)^2-\frac{U_3}{X_3}\left(2\frac{U_1}{X_3}\frac{X_1}{X_3}+2\frac{U_2}{X_3}\frac{X_2}{X_3}-\frac{U_3}{X_3}\right)=0
\end{equation}
of \eqref{equation_with_u} indicates that $\left(X_1: X_2: X_3:X_4\right)=\left(0:0:1:0\right)$ is a double point of the surface. 
The argument can also be applied to the other two points $\left(X_1: X_2: X_3:X_4\right)=\left(0:1:0:0\right)$ and $\left(X_1: X_2: X_3:X_4\right)=\left(1:0:0:0\right)$. 
We have in total four double points of this type. 

\item[(2)] 
Consider the points of the surface on the line $X_3=U_3=0$. 
Here, we can assume $X_4\neq 0$, since if $X_4=0$, we have $X_1X_2=0$ by \eqref{equation_with_u} and hence it reduces to the case (1). 
Under this assumption, we have $\displaystyle U_1X_1-U_2X_2=0$ by \eqref{equation_with_u}, which consists of two distinct points, as the condition is reduced to the quadratic equation in $(X_1: X_2)$: $ld_1X_1^2-\left\{(m+n)d_1+(n+l)d_2\right\}X_1X_2+md_2X_2^2=0$. 
Around each of these two points, the polynomials $\displaystyle \frac{U_3}{X_4}$, $\displaystyle \frac{X_3}{X_4}$, $\displaystyle \frac{U_1}{X_4}\frac{X_1}{X_4}-\frac{U_2}{X_4}\frac{X_2}{X_4}$ are of weight one, while $\displaystyle 2\frac{U_1}{X_4}\frac{X_1}{X_4}+2\frac{U_2}{X_4}\frac{X_2}{X_4}-\frac{U_3}{X_4}\frac{X_3}{X_4}$ is a unit. 
Thus, the transformation \eqref{eq_transform_kummer_1} of \eqref{equation_with_u} indicates that each of the two points is double point. 
The same method can be applied to the points where $X_1=U_1=0$ and $X_2=U_2=0$. 
The total number of these double points is six. 

\item[(3)] 
Consider the points with the condition $U_1=U_2=X_3=0$, with which we have $U_3\neq0$, $X_1\neq0$, $X_2\neq 0$, $X_4\neq0$. 
Around this points, the polynomials $\displaystyle \frac{U_1}{X_4}\frac{X_1}{X_4}-\frac{U_2}{X_4}\frac{X_2}{X_4}$, $\displaystyle 2\frac{U_1}{X_4}\frac{X_1}{X_4}+2\frac{U_2}{X_4}\frac{X_2}{X_4}-\frac{U_3}{X_4}\frac{X_3}{X_4}$, $\displaystyle \frac{X_3}{X_4}$ are of weight one, while $\displaystyle \frac{U_3}{X_4}$ is a unit. 
Thus, from the transformation \eqref{eq_transform_kummer_1} of \eqref{equation_with_u}, we see that the point is a double point.
The method is applied also to the points $U_2=U_3=X_1=0$, $U_3=U_1=X_2=0$. 
In total, we have three double points. 

\item[(4)]
At the point $U_1=U_2=U_3=0$, we have $0=U_1+U_2+U_3=\left(l+m+n\right)X_4=X_4$ and hence we obtain $\displaystyle \left(X_1: X_2: X_3:X_4\right)=\left(\frac{1}{d_1}:\frac{1}{d_2}:\frac{1}{d_3}:0\right)$. 
In particular, $X_1$, $X_2$, $X_3$ are non-zero at this point.
The functions $\displaystyle \frac{U_1}{X_3}\frac{X_1}{X_3}-\frac{U_2}{X_3}\frac{X_2}{X_3}$, $\displaystyle \frac{U_3}{X_3}$, $\displaystyle 2\frac{U_1}{X_3}\frac{X_1}{X_3}+2\frac{U_2}{X_3}\frac{X_2}{X_3}-\frac{U_3}{X_3}$ are polynomials of weight one. 
So, again, we have a double point because of the transformation \eqref{eq_transform_kummer_2} of \eqref{equation_with_u}. 
\end{enumerate}
The other two points can be detected as in \cite[\S 55]{hudson}. 
\end{proof}
Note that we obtain Kummer surface depending both on the parameters $(j_1,j_2,j_3)$ and on the values of the two first integrals $(C_3,C_4)$. 

To summarize, there are 

\begin{enumerate}

\item[(1)] The origin in the affine space $[0:0:0:1]$.

\item[(2)] Four points at infinity ($X_4=0$) which are: $[1:0:0:0],[0:1:0:0],[0:0:1:0]$ real independent of any constant and 
$\displaystyle\left[\frac{1}{d_1}:\frac{1}{d_2}:\frac{1}{d_3}:0\right]$ real depending only of the parameters $(j_1,j_2,j_3)$. 
There are two other singular points at infinity ($X_4=0$) obtained through the method in \cite[\S 55]{hudson}. 

\item[(3)] Three points $\displaystyle\left[\frac{m}{d_1}:-\frac{l}{d_2}:0:1\right], \left[0:\frac{n}{d_2}:-\frac{m}{d_3}:1\right], 
\left[-\frac{n}{d_1}:0:\frac{l}{d_3}:1\right]$ in the  real affine space.

\item[(4)] Six points of type $X_3=U_3=0$ (and cyclic perturbations) which are affine either real or complex conjugated depending on the sign of $[(m+n)d_1+(n+l)d_2]^2-4lmd_1d_2$ (and cyclic perturbations).
\end{enumerate}

\subsection{Supplementary coordinates $(x_1,x_2)$}
Inspired by \cite{web2}, we introduce two supplementary coordinates $(x_1,x_2)$ as the solutions of the equation
\begin{equation}
\frac{p_1^2}{x-j_1}+\frac{p_2^2}{x-j_2}+\frac{p_3^2}{x-j_3}=0,
\end{equation}
which can also be written as
\begin{equation}
x^2-Ex+F=0,
\end{equation}
where $\displaystyle E=\sum_{\alpha=1}^3 (j_1+j_2+j_3-j_\alpha)p_\alpha^2$, $\displaystyle F=\sum_{\alpha=1}^3 \frac{j_1j_2j_3}{j_\alpha}p_\alpha^2$.
Note that $(x_1,x_2)$ are functions only of the $p_\alpha, \alpha=1,2,3$, hence $\{x_1,x_2\}=0$ and that the expression of the squared coordinates $p_\alpha^2$, $\alpha=1,2,3$, in terms of $(x_1,x_2)$: 
\begin{equation}\label{alpha}
\begin{cases}
p_1^2=\displaystyle \frac{(j_1-x_1)(j_1-x_2)}{(j_1-j_2)(j_1-j_3)},\\
p_2^2=\displaystyle \frac{(j_2-x_1)(j_2-x_2)}{(j_2-j_3)(j_2-j_1)},\\
p_3^2=\displaystyle \frac{(j_3-x_1)(j_3-x_2)}{(j_3-j_1)(j_3-j_2)},
\end{cases}
\end{equation}
where we used $C_2=p_1^2+p_2^2+p_3^2=1$. 

We also introduce the two solutions $j_4, j_5$ of the equation 
\begin{align*}
\frac{l}{x-j_1}+\frac{m}{x-j_2}+\frac{n}{x-j_3}&=0 \\
\Longleftrightarrow 
x^2-\left\{(j_2+j_3)l+(j_1+j_3)m+(j_2+j_1)n\right\}x&+\left(j_2j_3l+j_1j_3m+j_1j_2n\right)=0.
\end{align*}
By $l+m+n=1$, we have the expression of the parameters $l, m, n$ in terms of $j_1,j_2,j_3$ as 
\begin{equation}\label{beta}
\begin{cases}
l&=\displaystyle \frac{(j_1-j_4)(j_1-j_5)}{(j_1-j_2)(j_1-j_3)}, \\
m&=\displaystyle \frac{(j_2-j_4)(j_2-j_5)}{(j_2-j_3)(j_2-j_1)}, \\
n&=\displaystyle \frac{(j_3-j_4)(j_3-j_5)}{(j_3-j_1)(j_3-j_2)}.
\end{cases}
\end{equation}

Putting $K_1=K_2=K_3=0$ in \eqref{linear_equation}, we have 
\begin{equation}
\begin{pmatrix}
l\\
m\\
n\\
\end{pmatrix}
=A^{-1}
\begin{pmatrix}
1\\
C_3\\
C_4\\
\end{pmatrix},
\end{equation}
which yields 
\begin{equation*}
\begin{cases}
l&=\displaystyle -\frac{1}{\Delta}\left\{j_1^2(j_2-j_3)+C_3j_1(j_3-j_2)+C_4(j_2-j_3)\right\}, \vspace{1mm} \\
m&=\displaystyle -\frac{1}{\Delta}\left\{j_2^2(j_3-j_1)+C_3j_2(j_1-j_3)+C_4(j_3-j_1)\right\}, \vspace{1mm} \\
n&=\displaystyle -\frac{1}{\Delta}\left\{j_3^2(j_1-j_2)+C_3j_3(j_2-j_1)+C_4(j_1-j_2)\right\}.
\end{cases}
\end{equation*}
Therefore, we have 
\begin{align*}
(j_2+j_3)l+(j_1+j_3)m+(j_2+j_1)n
&=-\frac{C_3}{\Delta}\left\{j_1(j_3^2-j_2^2)+j_2(j_1^2-j_3^2)+j_3(j_2^2-j_1^2)\right\}
=C_3, \\
j_2j_3l+j_1j_3m+j_1j_2n
&=-\frac{C_4}{\Delta}[j_2j_3(j_2-j_3)+j_1j_3(j_3-j_1)+j_1j_2(j_1-j_2)]
=C_4.
\end{align*}
and hence $j_4,j_5$ are the two roots of the equation 
\begin{equation}\label{equation_c_3_c_4}
x^2-C_3x+C_4=0. 
\end{equation}
Thus, they are independent of the parameters $j_1,j_2,j_3$ and only depends on $C_3, C_4$. 

\vskip 1pt
We observe that the two vectors 
$$e_1=\begin{pmatrix}\frac{p_1}{x_1-j_1} \\ \frac{p_2}{x_1-j_2} \\ \frac{p_3}{x_1-j_3}\end{pmatrix}, \qquad e_2=\begin{pmatrix}\frac{p_1}{x_2-j_1} \\ \frac{p_2}{x_2-j_2} \\ \frac{p_3}{x_2-j_3}\end{pmatrix}$$ are orthogonal to the vector $(p_1,p_2,p_3)^{\mathrm{T}}$ with respect to the standard Euclidean metric. 
In general, these two vectors are independent. 
The vector $(K_1,K_2,K_3)^{\mathrm{T}}$, which is also orthogonal to $(p_1,p_2,p_3)^{\mathrm{T}}$, is thus a linear combination of these two vectors $(K_1,K_2,K_3)^{\mathrm{T}}=Ae_1+Be_2$, $A, B\in\mathbb{R}$. 
This yields
\begin{equation}\label{defAB}
\begin{cases}
K_1&=\displaystyle \frac{1}{\sqrt{(j_1-j_2)(j_1-j_3)}}\left(A\sqrt{\frac{x_1-j_1}{x_2-j_1}}+B\sqrt{\frac{x_2-j_1}{x_1-j_1}}\right), \\
K_2&=\displaystyle \frac{1}{\sqrt{(j_2-j_3)(j_2-j_1)}}\left(A\sqrt{\frac{x_1-j_2}{x_2-j_2}}+B\sqrt{\frac{x_2-j_2}{x_1-j_2}}\right), \\
K_3&=\displaystyle \frac{1}{\sqrt{(j_3-j_1)(j_3-j_2)}}\left(A\sqrt{\frac{x_1-j_3}{x_2-j_3}}+B\sqrt{\frac{x_2-j_3}{x_1-j_3}}\right).
\end{cases}
\end{equation}

Next,  we insert these formulae inside (for instance) the first equations of $\eqref{key}$ and get 
\begin{align}\label{explicite p}
&\frac{(j_1-x_1)(j_1-x_2)}{(j_1-j_2)(j_1-j_3)}-\frac{(j_1-j_4)(j_1-j_5)}{(j_1-j_2)(j_1-j_3)} \notag \\
& =-\frac{1}{\Delta}\left[-[\left\{A^2(\frac{x_1-j_2}{x_2-j_2})+B^2(\frac{x_2-j_2}{x_1-j_2})-2AB\right\}+
\left\{A^2(\frac{x_1-j_3}{x_2-j_3})+B^2(\frac{x_2-j_3}{x_1-j_3})-2AB\right\}\right]  \notag \\
&=-\frac{1}{\Delta}[-A^2(\frac{x_1-j_2}{x_2-j_2}-\frac{x_1-j_3}{x_2-j_3})-B^2(\frac{x_2-j_2}{x_1-j_2}-\frac{x_2-j_3}{x_1-j_3})] \notag \\
&=-\frac{1}{\Delta}[-A^2(\frac{(x_2-x_1)(j_3-j_2)}{(x_2-j_2)(x_2-j_3)})-B^2(\frac{(x_1-x_2)(j_3-j_2)}{(x_1-j_2)(x_1-j_3)})], 
\end{align}
which yields 
\begin{equation}\label{bis_1}
(j_1-x_1)(j_1-x_2)-(j_1-j_4)(j_1-j_5)=-(x_1-x_2)\left\{\frac{A^2}{(x_2-j_2)(x_2-j_3)}-\frac{B^2}{(x_1-j_2)(x_1-j_3)}\right\}.
\end{equation}
Next, from the formulae \eqref{defAB} and the second equation of \eqref{key}, we obtain 
\begin{equation}\label{bis_2}
(j_2-x_1)(j_2-x_2)-(j_2-j_4)(j_2-j_5)=-(x_1-x_2)\left\{\frac{A^2}{(x_2-j_3)(x_2-j_1)}-\frac{B^2}{(x_1-j_3)(x_1-j_1)}\right\}. 
\end{equation}

We further assume the genericity assumption 
\begin{align*}
\det &\begin{pmatrix}
\frac{1}{(x_2-j_2)(x_2-j_3)} & -\frac{1}{(x_1-j_2)(x_1-j_3)} \\
\frac{1}{(x_2-j_2)(x_2-j_3)} & -\frac{1}{(x_1-j_2)(x_1-j_3)}
\end{pmatrix} \\
&=-\frac{(j_1-j_2)(x_1-x_2)}{(x_1-j_1)(x_1-j_2)(x_1-j_3)(x_2-j_1)(x_2-j_2)(x_2-j_3)}\neq 0
\end{align*}
on $(x_1,x_2, j_1, j_2, j_3)$ with which the two last equations \eqref{bis_1} and \eqref{bis_2} determine uniquely $A^2$ and $B^2$ in terms of $(x_1,x_2, j_1, j_2, j_3)$. 
Then, we have 
\begin{align}\label{a_b_squared}
A^2&=-\frac{\Phi(x_2)\Psi(x_1)}{(x_2-x_1)^2}, \qquad 
B^2=-\frac{\Phi(x_1)\Psi(x_2)}{(x_2-x_1)^2}, \notag \\
\Phi(x)&=(j_1-x)(j_2-x)(j_3-x), \qquad 
\Psi(x)=(x-j_4)(x-j_5). 
\end{align}
This also follows from the following identity satisfied by an arbitrary $\lambda$
\begin{equation}
(\lambda-x_1)(\lambda-x_2)-(\lambda-j_4)(\lambda-j_5)=\frac{1}{(x_2-x_1)}\left\{\Psi(x_1)(x_2-\lambda)-\Psi(x_2)(x_1-\lambda)\right\}.
\end{equation}
Substituting $\lambda$ by $j_1$ and $j_2$, we have \eqref{a_b_squared}. 

Next, from \eqref{defAB} and \eqref{a_b_squared}, we obtain the formula $(7)$ of \cite{web2}:
\begin{align}\label{explicit}
K_1&=\sqrt{\frac{-1}{(j_1-j_2)(j_1-j_3)}}\frac{1}{x_2-x_1}
\left\{\sqrt{\frac{\Phi(x_2)\Psi(x_1)(x_1-j_1)}{x_2-j_1}}-\sqrt{\frac{\Phi(x_1)\Psi(x_2)(x_2-j_1)}{x_1-j_1}}\right\}, \notag \\
K_2&= \sqrt{\frac{-1}{(j_2-j_3)(j_2-j_1)}}\frac{1}{x_2-x_1}\left\{\sqrt{\frac{\Phi(x_2)\Psi(x_1)(x_1-j_2)}{x_2-j_2}}-\sqrt{\frac{\Phi(x_1)\Psi(x_2)(x_2-j_2)}{x_1-j_2}}\right\}, \\
K_3&= \sqrt{\frac{-1}{(j_3-j_1)(j_3-j_2)}}\frac{1}{x_2-x_1}\left\{\sqrt{\frac{\Phi(x_2)\Psi(x_1)(x_1-j_3)}{x_2-j_3}}-\sqrt{\frac{\Phi(x_1)\Psi(x_2)(x_2-j_3)}{x_1-j_3}}\right\}. \notag 
\end{align}

At this point, we can observe that these equations induce an algebraic parametrization of the Kummer surface with the coordinates $(x_1,x_2)$. See also \cite[\S 99]{hudson}.

\subsection{Linearization of the Hamiltonian flows}
\begin{theorem}
The Hamiltonian flows associated with the Hamiltonians $\lambda C_3+\lambda' C_4$ induce linear flows on the Jacobian of the hyperelliptic curves $\mathcal{C}_c$ defined through the equation $y^2=(j_1-x)(j_2-x)(j_3-x)(j_4-x)(j_5-x)$.
\end{theorem}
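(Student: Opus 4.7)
The plan is to derive the equations of motion for the supplementary coordinates $(x_1,x_2)$ under the Hamiltonian flow of $\lambda C_3+\lambda' C_4$, cast them in classical Jacobi form, and read off linearization on the Jacobian of $\mathcal{C}_c$ via the Abel--Jacobi map.

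First, I decompose
\[
\lambda C_3+\lambda' C_4 \;=\; \sum_{\alpha=1}^{3}(\lambda+\lambda' j_\alpha)K_\alpha^{2}+\Pi(\bm p),
\]
where $\Pi$ depends only on $(p_1,p_2,p_3)$. Since $x_1,x_2$ are themselves functions of $\bm p$ alone and the $p_\alpha$ Poisson commute among themselves, $\Pi$ Poisson commutes with both $x_i$, so
\[
\dot x_i=\{\lambda C_3+\lambda' C_4,x_i\}=2\sum_{\alpha=1}^{3}(\lambda+\lambda' j_\alpha)\,K_\alpha\,\{K_\alpha,x_i\}.
\]
To evaluate $\{K_\alpha,x_i\}$ I would differentiate implicitly the defining relation $\sum_\beta p_\beta^{2}/(x_i-j_\beta)=0$, using the Lie--Poisson brackets $\{K_\alpha,p_\beta\}=-\epsilon_{\beta\alpha\gamma}p_\gamma$. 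The denominator $\sum_\beta p_\beta^{2}/(x_i-j_\beta)^{2}$ that arises equals $-f'(x_i)$ for the rational function $f(x)=(x-x_1)(x-x_2)/\prod_\gamma(x-j_\gamma)$ (whose numerator is pinned down by $f(x_k)=0$ and the normalization $C_2=1$ at infinity), giving the clean ratio $(x_i-x_j)/\Phi(x_i)$ with $\{i,j\}=\{1,2\}$. Substituting then the parametrization \eqref{alpha} for $p_\alpha^{2}$ and the formulas \eqref{explicit} for $K_\alpha$, together with the identifications $A^{2}=-\Phi(x_2)\Psi(x_1)/(x_2-x_1)^{2}$ and $B^{2}=-\Phi(x_1)\Psi(x_2)/(x_2-x_1)^{2}$, the sum $\sum_\alpha(\lambda+\lambda' j_\alpha)K_\alpha\{K_\alpha,x_i\}$ collapses to the classical Jacobi form
\[
\dot x_i \;=\;\frac{2(\lambda+\lambda' x_j)\,y_i}{x_i-x_j},\qquad \{i,j\}=\{1,2\},\qquad y_i^{2}=\Phi(x_i)\Psi(x_i),
\]
where the sign of $y_i$ is inherited from that of $A,B$.

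With the equations in this form, a direct algebraic manipulation yields
\[
\frac{\dot x_1}{y_1}+\frac{\dot x_2}{y_2}=-2\lambda',\qquad \frac{x_1\dot x_1}{y_1}+\frac{x_2\dot x_2}{y_2}=2\lambda,
\]
both independent of $(x_1,x_2)$. Since $\omega_1=dx/y$ and $\omega_2=x\,dx/y$ span $H^{0}(\mathcal{C}_c,\Omega^{1})$ for the genus-two hyperelliptic curve $\mathcal{C}_c$ with $y^{2}=\Phi(x)\Psi(x)=\prod_{k=1}^{5}(j_k-x)$, the Abel--Jacobi map $\mathrm{Sym}^{2}\mathcal{C}_c\to\mathrm{Jac}(\mathcal{C}_c)$ sends the trajectory $t\mapsto(x_1(t),x_2(t))$ to a straight line with constant velocity $(-2\lambda',\,2\lambda)$, which is the claimed linearization.

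The main obstacle lies in the symmetric simplification leading to the single-radical formula for $\dot x_i$: the raw sum $\sum_\alpha(\lambda+\lambda' j_\alpha)K_\alpha\{K_\alpha,x_i\}$ produces many terms which must cancel via Lagrange-interpolation identities in $j_1,j_2,j_3$ combined with the closed expressions for $A^{2},B^{2}$. A cleaner organization is probably afforded by introducing a spectral parameter $\mu$ and computing the Poisson bracket $\{\sum_\alpha K_\alpha^{2}/(\mu-j_\alpha),x_i\}$ in one stroke, so that the $(\lambda,\lambda')$-dependence is recovered at the end by partial-fraction specialization of $\mu$.
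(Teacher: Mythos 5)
Your proposal is correct in outline and reaches the same two key waypoints as the paper: the Jacobi-form equations $\dot x_i=\dfrac{2(\lambda+\lambda' x_j)R(x_i)}{x_i-x_j}$ and their consequence $\sum_i \dot x_i/R(x_i)=-2\lambda'$, $\sum_i x_i\dot x_i/R(x_i)=2\lambda$, from which linearization follows because $\mathsf{d}x/R$ and $x\,\mathsf{d}x/R$ span the holomorphic differentials of the genus-two curve. The difference is in how you reach the ODEs for $(x_1,x_2)$. You compute $\dot x_i=\{\lambda C_3+\lambda' C_4,x_i\}$ directly, using the clean observations that the $\bm p$-dependent part of the Hamiltonian drops out (since $\{p_\alpha,p_\beta\}=0$ and $x_i$ depends on $\bm p$ alone) and that implicit differentiation of $\sum_\beta p_\beta^2/(x_i-j_\beta)=0$ gives $\partial x_i/\partial p_\beta$ with the factor $(x_i-x_j)/\Phi(x_i)$. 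The paper instead computes $\dot p_1$ twice --- once by the chain rule through the parametrization \eqref{alpha}, once from the equations of motion \eqref{flowbis} with \eqref{explicit} substituted --- and then solves the resulting three linear relations (one for each $j_\alpha$) for $\dot x_1,\dot x_2$. These are essentially dual organizations of the same computation, and both ultimately rest on the same inputs, namely \eqref{alpha}, \eqref{explicit} and \eqref{a_b_squared}; your route arguably isolates more transparently why only the $K$-part of the Hamiltonian matters. The one caveat is that you assert rather than carry out the collapse of $\sum_\alpha(\lambda+\lambda' j_\alpha)K_\alpha\{K_\alpha,x_i\}$ to a single radical; this is precisely the several-line cancellation the paper performs explicitly (introducing $R(x)$ and regrouping), so the step is known to go through, but as written your argument is a sketch at its computational core. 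Your suggested spectral-parameter reorganization via $\{\sum_\alpha K_\alpha^2/(\mu-j_\alpha),x_i\}$ is a reasonable way to systematize that cancellation, though it is not what the paper does.
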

\begin{proof}

The equations \eqref{KE_p} can be written as:
\begin{align}\label{flowbis}
\begin{cases}
\dot{p}_1&=(\lambda+\lambda^{\prime}j_2)K_2 p_3-(\lambda+\lambda^{\prime}j_3)K_3 p_2, \\
\dot{p}_2&=(\lambda+\lambda^{\prime}j_3)K_3 p_1-(\lambda+\lambda^{\prime}j_1)K_1 p_3, \\
\dot{p}_3&=(\lambda+\lambda^{\prime}j_1)K_1 p_2-(\lambda+\lambda^{\prime}j_2)K_2 p_1. 
\end{cases}
\end{align}
Recall that $n_{\alpha}=\displaystyle \frac{1}{2I_{\alpha}}$ and $n_{\alpha}^{\prime}=\displaystyle \frac{1}{2m_{\alpha}}$.

From \eqref{alpha}, we obtain 
\begin{equation*}
\frac{\mathsf{d}p_1}{\mathsf{d}t}=-\frac{1}{2\sqrt{(j_1-j_2)(j_1-j_3)}}\left(\sqrt{\frac{j_1-x_2}{j_1-x_1}}\frac{\mathsf{d}x_1}{\mathsf{d}t}+\sqrt{\frac{j_1-x_1}{j_1-x_2}}\frac{\mathsf{d}x_2}{\mathsf{d}t}\right).
\end{equation*}

On the other hand, from \eqref{flowbis}, \eqref{explicit}, \eqref{alpha}, we have 
\begin{align*}
\dot{p}_1&= (\lambda+\lambda^{\prime}j_2)\sqrt{\frac{-1}{(j_2-j_3)(j_2-j_1)}}\frac{1}{x_2-x_1} \\
&\quad \times \left\{\sqrt{\Phi(x_2)\Psi(x_1)}\sqrt{\frac{x_1-j_2}{x_2-j_2}}-\sqrt{\Phi(x_1)\Psi(x_2)}\sqrt{\frac{x_2-j_2}{x_1-j_2}}\right\}\sqrt{\frac{(j_3-x_1)(j_3-x_2)}{(j_3-j_1)(j_3-j_2)}}\\
&-(\lambda+\lambda^{\prime}j_3)\sqrt{\frac{-1}{(j_3-j_1)(j_3-j_2)}}\frac{1}{x_2-x_1}\\
&\quad \times \left\{\sqrt{\Phi(x_2)\Psi(x_1)}\sqrt{\frac{x_1-j_3}{x_2-j_3}}-\sqrt{\Phi(x_1)\Psi(x_2)}\sqrt{\frac{x_2-j_3}{x_1-j_3}}\right\}\sqrt{\frac{(j_2-x_1)(j_2-x_2)}{(j_2-j_3)(j_2-j_1)}} 
\end{align*}
\begin{align*}
&= (\lambda+\lambda^{\prime}j_2)\sqrt{\frac{-1}{(j_2-j_3)(j_2-j_1)}}\frac{1}{x_2-x_1}\frac{1}{\sqrt{(j_3-j_1)(j_3-j_2)}} \\
&\quad \times \left\{R(x_1)(x_2-j_3)\sqrt{\frac{x_2-j_1}{x_1-j_1}}-R(x_2)(x_1-j_3)\sqrt{\frac{x_1-j_1}{x_2-j_1}}\right\} \\
&-(\lambda+\lambda^{\prime}j_3)\sqrt{\frac{-1}{(j_3-j_1)(j_3-j_2)}}\frac{1}{x_2-x_1}\frac{1}{\sqrt{(j_2-j_3)(j_2-j_1)}}\\
&\quad \times \left\{R(x_1)(x_2-j_2)\sqrt{\frac{x_2-j_1}{x_1-j_1}}-R(x_2)(x_1-j_2)\sqrt{\frac{x_1-j_1}{x_2-j_1}}\right\}\\
&=\sqrt{\frac{-1}{(j_3-j_1)(j_1-j_2)}}\frac{1}{x_2-x_1}
\left\{\left(\lambda +\lambda^{\prime}x_2\right)\sqrt{\frac{x_2-j_1}{x_1-j_1}}R(x_1)-\left(\lambda +\lambda^{\prime}x_1\right)\sqrt{\frac{x_1-j_1}{x_2-j_1}}R(x_2)\right\},
\end{align*}
where we have set $R(x)=\sqrt{(j_1-x)(j_2-x)(j_3-x)(j_4-x)(j_5-x)}$. 
Thus, we have 
\begin{align*}
&\sqrt{\frac{j_1-x_2}{j_1-x_1}}\frac{\mathsf{d}x_1}{\mathsf{d}t}+\sqrt{\frac{j_1-x_1}{j_1-x_2}}\frac{\mathsf{d}x_2}{\mathsf{d}t} \\
&\qquad =\frac{2}{x_1-x_2}\left\{\left(\lambda +\lambda^{\prime}x_2\right)R(x_1)\sqrt{\frac{x_2-j_1}{x_1-j_1}}-\left(\lambda +\lambda^{\prime}x_1\right)R(x_2)\sqrt{\frac{x_1-j_1}{x_2-j_1}}\right\}.
\end{align*}
Similar equations can be obtained via cyclic permutations of $(1,2,3)$ for the subindices in $j_1, j_2, j_3$: 
\begin{align*}
&\sqrt{\frac{j_2-x_2}{j_2-x_1}}\frac{\mathsf{d}x_1}{\mathsf{d}t}+\sqrt{\frac{j_2-x_1}{j_2-x_2}}\frac{\mathsf{d}x_2}{\mathsf{d}t} \\
&\qquad =\frac{2}{x_1-x_2}\left\{\left(\lambda +\lambda^{\prime}x_2\right)R(x_1)\sqrt{\frac{x_2-j_2}{x_1-j_2}}-\left(\lambda +\lambda^{\prime}x_1\right)R(x_2)\sqrt{\frac{x_1-j_2}{x_2-j_2}}\right\}, \\
&\sqrt{\frac{j_3-x_2}{j_3-x_1}}\frac{\mathsf{d}x_1}{\mathsf{d}t}+\sqrt{\frac{j_3-x_1}{j_3-x_2}}\frac{\mathsf{d}x_2}{\mathsf{d}t} \\
&\qquad =\frac{2}{x_1-x_2}\left\{\left(\lambda +\lambda^{\prime}x_2\right)R(x_1)\sqrt{\frac{x_2-j_3}{x_1-j_3}}-\left(\lambda +\lambda^{\prime}x_1\right)R(x_2)\sqrt{\frac{x_1-j_3}{x_2-j_3}}\right\}.
\end{align*}
Since $j_1, j_2, j_3$ (and hence $\displaystyle \sqrt{\frac{j_1-x_2}{j_1-x_1}}$, $\displaystyle \sqrt{\frac{j_2-x_2}{j_2-x_1}}$, $\displaystyle \sqrt{\frac{j_3-x_2}{j_3-x_1}}$) can be assumed to be distinct, this means that the following two linear forms in $\left(\xi,\eta\right)$ are the same: 
\begin{align*}
&\frac{\mathsf{d}x_1}{\mathsf{d}t}\xi+\frac{\mathsf{d}x_2}{\mathsf{d}t}\eta, \\
&\frac{2}{x_1-x_2}\left\{\left(\lambda +\lambda^{\prime}x_2\right)R(x_1)\xi-\left(\lambda +\lambda^{\prime}x_1\right)R(x_2)\eta\right\}.
\end{align*}
Therefore, the flow induced by the family of Hamiltonian systems $\lambda C_3+\lambda' C_4$ yields 
\begin{align*}
\frac{\mathsf{d}x_1}{\mathsf{d}t}=2\frac{1}{x_1-x_2}R(x_1)(\lambda^{\prime}x_2+\lambda), \\
\frac{\mathsf{d}x_2}{\mathsf{d}t}=2\frac{1}{x_2-x_1}R(x_2)(\lambda^{\prime}x_1+\lambda), 
\end{align*}
and hence 
\begin{equation}\label{diagonalization}
\begin{cases}
\displaystyle \frac{1}{R(x_1)}\frac{\mathsf{d}x_1}{\mathsf{d}t}+\frac{1}{R(x_2)}\frac{\mathsf{d}x_2}{\mathsf{d}t}
&=-2\lambda^{\prime}, \vspace{2mm}\\
\displaystyle \frac{x_1}{R(x_1)}\frac{\mathsf{d}x_1}{\mathsf{d}t}+\frac{x_2}{R(x_2)}\frac{\mathsf{d}x_2}{\mathsf{d}t}
&=2\lambda.
\end{cases}
\end{equation}
This proves the theorem as the two differential forms $\displaystyle \frac{\mathsf{d}x}{R(x)}$, $\displaystyle \frac{x\mathsf{d}x}{R(x)}$ provide a basis of the space of holomorphic 1-forms on the Jacobian of the hyperelliptic curve $\mathcal{C}_c$. 
\end{proof}

Note that these equations \eqref{diagonalization} induce a uniformization of the Kummer surface \cite[\S 99]{hudson}.

\section{Some special solutions}
In this section, we discuss two types of special solutions to Kirchhoff equations \eqref{KE_K}, \eqref{KE_p} of Clebsch top. 
Namely, as is pointed out in \cite[\S 2.2, \S 2.4]{HJL}, there are three special solutions on the three-dimensional vector spaces $p_1=K_2=K_3=0$, $p_2=K_3=K_1=0$, $p_3=K_1=K_2=0$, as well as the solutions on the three-dimensional vector spaces $p_{\alpha}=\delta_{\alpha}K_{\alpha}$, $\alpha=1,2,3$, with suitable constants $\delta_{\alpha}$, $\alpha=1,2,3$. 
In fact, assuming e.g. $p_1=K_2=K_3$, we can reduce Kirchhoff equations \eqref{KE_K}, \eqref{KE_p} to 
\begin{equation}\label{special_sol_1}
\begin{cases}
\displaystyle \frac{\mathsf{d}K_1}{\mathsf{d}t}&=\displaystyle \left(\frac{1}{m_3}-\frac{1}{m_2}\right)p_2p_3, \vspace{1mm}\\
\displaystyle \frac{\mathsf{d}p_2}{\mathsf{d}t}&=\displaystyle \frac{1}{I_1}p_3K_1, \vspace{1mm}\\
\displaystyle \frac{\mathsf{d}p_3}{\mathsf{d}t}&=\displaystyle -\frac{1}{I_1}K_1p_2, 
\end{cases}
\end{equation}
while, under the condition $p_{\alpha}=\delta_{\alpha}K_{\alpha}$, $\alpha=1,2,3$, the equations \eqref{KE_K}, \eqref{KE_p} are transformed to 
\begin{equation}\label{special_sol_2}
\begin{cases}
\displaystyle \frac{\mathsf{d}K_1}{\mathsf{d}t}&=\displaystyle \frac{1}{\delta_1}\left(\frac{\delta_2}{I_3}-\frac{\delta_3}{I_2}\right)K_2K_3, \vspace{1mm}\\
\displaystyle \frac{\mathsf{d}K_2}{\mathsf{d}t}&=\displaystyle \frac{1}{\delta_2}\left(\frac{\delta_3}{I_1}-\frac{\delta_1}{I_3}\right)K_3K_1, \vspace{1mm}\\
\displaystyle \frac{\mathsf{d}K_3}{\mathsf{d}t}&=\displaystyle \frac{1}{\delta_3}\left(\frac{\delta_1}{I_2}-\frac{\delta_2}{I_1}\right)K_1K_2.
\end{cases}
\end{equation}

To describe the Hamiltonian structures of these equations \eqref{special_sol_1} and \eqref{special_sol_2}, we first consider the modified Lie bracket $\left[\cdot, \cdot\right]_M$ on $\mathbb{C}^3$ associated to a diagonal matrix $M=\mathrm{diag}\left(\mu_1, \mu_2, \mu_3\right)$: 
\[
\left[\bm{u}, \bm{v}\right]_M:=M\left(\bm{u}\times \bm{v}\right), \bm{u}, \bm{u}\in\mathbb{C}^3. 
\]
In association to the modified Lie bracket $\left[\cdot, \cdot\right]_M$, we can define Lie-Poisson bracket $\left\{\cdot, \cdot\right\}_M$ through 
\[
\left\{F, G\right\}_M(\bm{x}):=\left\langle \bm{x}, M\left(\nabla F(\bm{x})\times \nabla G(\bm{x})\right)\right\rangle, 
\]
where $\bm{x}\in\mathbb{C}^3$ and $F, G$ are differentiable functions on $\mathbb{C}^3$. 
The Lie bracket $\left[\cdot, \cdot\right]_M$ and the associated Lie-Poisson bracket $\left\{\cdot, \cdot\right\}_M$ are considered in \cite{holm_marsden_1991} and used to relate the free rigid body and the simple pendulum. 
(It is further used in \cite{iwai_tarama_2010} to analyze the quantization problems.) 

Since $\left\{F, G\right\}_M=\left\langle (M\bm{x})\times \nabla F(\bm{x}), \nabla G(\bm{x})\right\rangle$, the Hamiltonian vector field $\Xi_F^{(M)}$ for the Hamiltonian $F$ is given as $\left(\Xi_F^{(M)}\right)_{\bm{x}}=(M\bm{x})\times \nabla F(\bm{x})$. 
In particular, if we choose the Hamiltonian $F$ as $F(\bm{x})=\displaystyle \frac{1}{2}\left\langle \bm{x}, \mathrm{diag}\left(f_1, f_2,f_3\right)\bm{x}\right\rangle$, then the Hamiltonian vector field is written as 
\[
\left(\Xi_F^{(M)}\right)_{\bm{x}}=
\left(\left(\mu_2 f_3-\mu_3 f_2\right)x_2x_3,. \left(\mu_3 f_1-\mu_1 f_3\right)x_3x_1, \left(\mu_1 f_2-\mu_2 f_1\right)x_1x_2\right)^{\mathrm{T}},
\]
where $\bm{x}=\left(x_1, x_2, x_3\right)^{\mathrm{T}}$. 

Then, one can easily check that, regarding $x_1=K_1$, $x_2=p_2$, $x_3=p_3$, we can recover \eqref{special_sol_1}, by setting $\mu_{\alpha}=\displaystyle \frac{1}{I_{\alpha}}$, $\alpha=1,2,3$, $f_1=0$, $f_2=-1$, $f_3=-1$. 
On the other hand, \eqref{special_sol_2} can be obtained for $x_{\alpha}=K_{\alpha}$, $\alpha=1,2,3$, by substituting as $\mu_1=\displaystyle \frac{\delta_1}{\delta_2\delta_3}$, $\mu_2=\displaystyle \frac{\delta_2}{\delta_3\delta_1}$, $\mu_3=\displaystyle \frac{\delta_3}{\delta_1\delta_2}$, $f_{\alpha}=\displaystyle \frac{\delta_{\alpha}}{I_{\alpha}}$, $\alpha=1,2,3$. 

The solutions can be given in terms of elliptic functions because there are two quadratic first integrals $\displaystyle \frac{1}{2}\left\langle \bm{x}, M\bm{x}\right\rangle$, $F(\bm{x})=\displaystyle \frac{1}{2}\left\langle \bm{x}, \mathrm{diag}\left(f_1, f_2,f_3\right)\bm{x}\right\rangle$ and because the intersection of the level surfaces of these functions are elliptic curves. 
This is a parallel argument to the one for Euler top. 

\medskip

The above two kinds of special solutions can be found in the intersection of four quadric level surfaces for the functions $C_1$, $C_2$, $C_3$, $C_4$ in \eqref{four_first_int}, but with specific values of $C_3$ and $C_4$. 
(Recall that we assume $C_1=0$ and $C_2=1$.) 
About the solution \eqref{special_sol_1}, the condition $p_1=K_2=K_3=0$ implies that $C_1\equiv 0$, $C_2=p_2^2+p_3^2=1$, $C_3=K_1^2+(j_3+j_1)p_2^2+(j_1+j_2)p_3^2$, $C_4=j_1K_1^2+j_3j_1p_2^2+j_1j_2p_3^2$. 
Now, it is easy to check that $j_1^2-C_3j_1^2+C_4=0$. 
Taking \eqref{equation_c_3_c_4} into account, we see that this is equivalent to the case where either $j_1=j_4$ or $j_1=j_5$ is satisfied. 
Clearly, the same type of special solution with the condition $p_2=K_3=K_1=0$ can be obtained only if $j_2^2-C_3j_2+C_4=(j_2-j_4)(j_2-j_5)=0$ and the one with $p_3=K_1=K_2$ can be only if $j_3^2-C_3j_3+C_4=(j_3-j_4)(j_3-j_5)=0$. 

About the solution \eqref{special_sol_2}, the values of $C_3$ and $C_4$ must satisfy the condition $j_4=j_5$. 
In fact, if we assume the condition $p_{\alpha}=\delta_{\alpha}K_{\alpha}$, $\alpha=1,2,3$, the functions $C_1$, $C_2$, $C_3$, $C_4$ in \eqref{four_first_int} amount to 
\begin{align}\label{condition_p_K}
C_1&=\displaystyle \sum_{\alpha=1}^3\delta_{\alpha}K_{\alpha}^2, \quad
C_2=\displaystyle \sum_{\alpha=1}^3\delta_{\alpha}^2K_{\alpha}^2, \notag \\
C_3&=\displaystyle \sum_{\alpha=1}^3\left(1+\left(j_{\beta}+j_{\gamma}\right)\delta_{\alpha}^2\right)K_{\alpha}^2, \\
C_4&=\displaystyle \sum_{\alpha=1}^3\left(j_{\alpha}+j_{\beta}j_{\gamma}\delta_{\alpha}\right)K_{\alpha}^2. \notag
\end{align}
Here, $\left\{\alpha, \beta, \gamma\right\}=\left\{1,2,3\right\}$. 
For these functions, we have the dependence relation $C_3^2-4C_4=0$, which is equivalent to say that the quadric equation \eqref{equation_c_3_c_4} has one double root, namely $j_4=j_5$. 

To verify the relation $C_3^2-4C_4=0$, we first show the following lemma. 
\begin{lemma}
For the condition $p_{\alpha}=\delta_{\alpha}K_{\alpha}$, $\alpha=1,2,3$, to be satisfied along an integral curve of the Hamiltonian flow for the Hamiltonian $\lambda C_3+\lambda^{\prime}C_4$, it is necessary that there exist two constants $\sigma, \sigma^{\prime}\in\mathbb{C}$ such that $\left(j_1-\sigma^{\prime}\right)\left(j_2-\sigma^{\prime}\right)\left(j_3-\sigma^{\prime}\right)=\sigma^2$ and that $j_{\alpha}=\sigma\delta_{\alpha}+\sigma^{\prime}$, $\alpha=1,2,3$. 
\end{lemma}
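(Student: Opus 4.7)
The plan is to require that the three-dimensional subspace $V := \{(\bm{K},\bm{p}) \in \mathbb{C}^6 : p_\alpha = \delta_\alpha K_\alpha,\ \alpha=1,2,3\}$ be invariant under the Hamiltonian flow of $\lambda C_3 + \lambda' C_4$, and then to extract the resulting algebraic constraints on $(\delta_\alpha, j_\alpha)$. Since $C_3$ and $C_4$ Poisson commute and generate the integrable pencil, invariance under the flow of $\lambda C_3 + \lambda' C_4$ is equivalent to invariance under both $\Xi_{C_3}$ and $\Xi_{C_4}$ separately; this is what will allow the coefficient separation in $\lambda, \lambda'$ below.

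First I would write $\lambda C_3 + \lambda' C_4 = \sum_\alpha(n_\alpha K_\alpha^2 + n'_\alpha p_\alpha^2)$ and, using the formula $(\Xi_F)_{(\bm{K},\bm{p})} = (\bm{K}\times\nabla_{\bm{K}} F + \bm{p}\times\nabla_{\bm{p}} F,\ \bm{p}\times\nabla_{\bm{K}} F)$ for the Hamiltonian vector field, derive the flow equations $\dot{K}_\alpha = 2(n_\gamma - n_\beta)K_\beta K_\gamma + 2(n'_\gamma - n'_\beta)p_\beta p_\gamma$ and $\dot{p}_\alpha = 2 n_\gamma p_\beta K_\gamma - 2 n_\beta p_\gamma K_\beta$ for each cyclic permutation $(\alpha,\beta,\gamma)$ of $(1,2,3)$. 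Substituting $p_\rho = \delta_\rho K_\rho$ into the tangency condition $\dot{p}_\alpha = \delta_\alpha \dot{K}_\alpha$ and dividing by the nonzero factor $2K_\beta K_\gamma$ reduces everything to the scalar identity $n_\gamma(\delta_\beta - \delta_\alpha) - n_\beta(\delta_\gamma - \delta_\alpha) = \delta_1\delta_2\delta_3\,(n'_\gamma - n'_\beta)$ for each cyclic triplet.

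I would then substitute $n_\alpha = \lambda + \lambda' j_\alpha$ together with the key algebraic identity $n'_\gamma - n'_\beta = (j_\beta - j_\gamma)(\lambda + \lambda' j_\alpha)$ (immediate from the definition of $C_4$ together with $j_1 j_2 j_3 / (j_\beta j_\gamma) = j_\alpha$), and equate coefficients of $\lambda$ and $\lambda'$. The $\lambda$-coefficient yields the central relation $\delta_\beta - \delta_\gamma = \delta_1 \delta_2 \delta_3\,(j_\beta - j_\gamma)$, while the $\lambda'$-coefficient yields $(j_\beta - j_\gamma)\delta_\alpha + (j_\gamma - j_\alpha)\delta_\beta + (j_\alpha - j_\beta)\delta_\gamma = 0$, an elementary substitution showing it is automatic once the $\lambda$-relation holds.

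Finally, setting $\sigma := 1/(\delta_1 \delta_2 \delta_3)$ rewrites the central relation as $j_\beta - j_\gamma = \sigma(\delta_\beta - \delta_\gamma)$, so that $j_\alpha - \sigma \delta_\alpha$ is independent of $\alpha$; denoting its common value by $\sigma'$ gives $j_\alpha = \sigma\delta_\alpha + \sigma'$, and the product identity then drops out from $\prod_\alpha(j_\alpha - \sigma') = \sigma^3\, \delta_1\delta_2\delta_3 = \sigma^2$. The main obstacle I anticipate is not conceptual but computational: the careful manipulation of the differences $n'_\gamma - n'_\beta$ using the specific form of $C_4$ and the clean separation of the resulting polynomial identity into its $\lambda$ and $\lambda'$ parts; once that separation is in hand, identifying $(\sigma,\sigma')$ and verifying the product identity is immediate.
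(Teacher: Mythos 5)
Your overall strategy is the same as the paper's: write the tangency condition of $\Xi_{\lambda C_3+\lambda'C_4}$ to the subspace $p_\alpha=\delta_\alpha K_\alpha$, divide out $K_\beta K_\gamma$, and reduce to linear relations among the differences $\delta_\beta-\delta_\gamma$ and the product $\delta_1\delta_2\delta_3$. Your scalar identity
$n_\gamma(\delta_\beta-\delta_\alpha)-n_\beta(\delta_\gamma-\delta_\alpha)=\delta_1\delta_2\delta_3\,(n'_\gamma-n'_\beta)$
and the simplification $n'_\gamma-n'_\beta=(j_\beta-j_\gamma)(\lambda+\lambda' j_\alpha)$ are exactly the (corrected) content of the paper's bracket computation, and your closing step --- $\sigma:=1/(\delta_1\delta_2\delta_3)$, hence $\prod_\alpha(j_\alpha-\sigma')=\sigma^3\delta_1\delta_2\delta_3=\sigma^2$ --- is correct and in fact cleaner than the paper's phrasing of the product identity.

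The one step that does not hold up is the justification for equating the coefficients of $\lambda$ and $\lambda'$ separately. You assert that, because $C_3$ and $C_4$ Poisson commute, invariance of $V$ under the flow of a single Hamiltonian $\lambda C_3+\lambda'C_4$ is equivalent to invariance under $\Xi_{C_3}$ and $\Xi_{C_4}$ individually. That implication is false in general: two commuting vector fields can have a linear combination tangent to a submanifold without either being tangent to it (take $\partial_x$, $\partial_y$ and the diagonal in $\mathbb{R}^2$). Commutativity gives you nothing here; what you would need is the hypothesis of invariance for two independent members of the pencil, which is strictly stronger than what the lemma assumes. The paper avoids this by keeping $(\lambda,\lambda')$ fixed throughout: the three cyclic tangency identities form a $3\times 3$ homogeneous linear system in $\left(\delta_1-\delta_2,\ \delta_1-\delta_3,\ \delta_1\delta_2\delta_3\right)$ with $(\lambda,\lambda')$-dependent entries, and Gaussian elimination (for generic $\lambda,\lambda'$) shows its solution space is spanned by $\left(j_1-j_2,\ j_1-j_3,\ 1\right)$, which is precisely your ``central relation.'' So your conclusion is correct and your $\lambda'$-coefficient check becomes the a posteriori observation that the relation forces invariance under the whole pencil --- but that is an output of the elimination, not something you may assume at the outset. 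Replace the separation-of-coefficients step by solving the fixed-$(\lambda,\lambda')$ linear system and the proof is complete.
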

\begin{proof}
For $p_{\alpha}=\delta_{\alpha}K_{\alpha}$, $\alpha=1,2,3$ to be invariant along an integral curve of $\Xi_{\lambda C_3+\lambda^{\prime}C_4}$, it is necessary and sufficient that $\left\{\lambda C_3+\lambda^{\prime}C_4, p_{\alpha}-\delta_{\alpha}K_{\alpha}\right\}=0$, $\alpha=1,2,3$. 
Since 
\begin{align*}
&\left\{\lambda C_3+\lambda^{\prime}C_4, p_1-\delta_1K_1\right\} \\
&\quad =
\left\{\left(\lambda+\lambda^{\prime}j_3\right)\left(\delta_1-\delta_2\right)-\left(\lambda+\lambda^{\prime}j_2\right)\left(\delta_1-\delta_2\right)+\left(j_2-j_3\right)\left(\lambda+\lambda^{\prime}j_1\right)\delta_1\delta_2\delta_3\right\}K_2K_3, 
\end{align*}
we have $\left(\lambda+\lambda^{\prime}j_3\right)\left(\delta_1-\delta_2\right)-\left(\lambda+\lambda^{\prime}j_2\right)\left(\delta_1-\delta_2\right)+\left(j_2-j_3\right)\left(\lambda+\lambda^{\prime}j_1\right)\delta_1\delta_2\delta_3=0$. 
By cyclic change of parameters, we obtain the system of linear equations in $\left(\delta_1-\delta_2, \delta_1-\delta_3, \delta_1\delta_2\delta_3\right)^{\mathrm{T}}$: 
\begin{equation*}
\begin{pmatrix}
\lambda+j_3\lambda^{\prime} & -\left(\lambda+j_2\lambda^{\prime}\right) & \left(j_2-j_3\right)\left(\lambda+j_1\lambda^{\prime}\right) \\
\left(j_3-j_1\right)\lambda^{\prime} & \lambda +j_1\lambda^{\prime} & \left(j_3-j_1\right)\left(\lambda+j_1\lambda^{\prime}\right) \\
-\left(\lambda +j_1\lambda^{\prime}\right) & \left(j_1-j_2\right)\lambda^{\prime} & \left(j_1-j_2\right)\left(\lambda+j_3\lambda^{\prime}\right)
\end{pmatrix}
\begin{pmatrix}
\delta_1-\delta_2\\ \delta_1-\delta_3\\ \delta_1\delta_2\delta_3
\end{pmatrix}
=\begin{pmatrix}
0 \\ 0 \\ 0
\end{pmatrix}. 
\end{equation*}
These equations have already appeared in \cite[\S A.1]{HJL} and it can be reduced through Gau{\ss} method to the simple equations 
\begin{equation*}
\begin{pmatrix}
1 & 0 & -\left(j_1-j_2\right) \\ 
0 & 1 & -\left(j_1-j_3\right) \\ 
0 & 0 & 0
\end{pmatrix}
=\begin{pmatrix}
0 \\ 0 \\ 0
\end{pmatrix}. 
\end{equation*}
Thus, there exist $s, s^{\prime}\in\mathbb{C}$ such that $\delta_{\alpha}=sj_{\alpha}+s^{\prime}$. 
In other words, we have $j_{\alpha}=\sigma\delta_{\alpha}+\sigma^{\prime}$, $\alpha=1,2,3$, where $\sigma=1/s$ and $\sigma^{\prime}=-s^{\prime}/s$. 
By the condition $\delta_1\delta_2\delta_3=1$, we have the condition $\left(sj_1+s^{\prime}\right)\left(sj_2+s^{\prime}\right)\left(sj_3+s^{\prime}\right)=s\iff \left(j_1-\sigma^{\prime}\right)\left(j_2-\sigma^{\prime}\right)\left(j_3-\sigma^{\prime}\right)=\sigma^2$. 
\end{proof}
Inserting the condition $j_{\alpha}=\sigma\delta_{\alpha}+\sigma^{\prime}$, $\alpha=1,2,3$, to \eqref{condition_p_K}, we have 
\begin{align*}
C_3
&=\sum_{\alpha=1}^3\left\{1+\sigma\left(\delta_{\beta}+\delta_{\gamma}\right)\delta_{\alpha}^2+2\sigma^{\prime}\delta_{\alpha}^2\right\}K_{\alpha}^2 \\
&=\frac{1}{s}\sum_{\alpha=1}^3\left\{s+\left(\delta_{\alpha}\delta_{\beta}+\delta_{\gamma}\delta_{\alpha}\right)\delta_{\alpha}\right\}K_{\alpha}^2+2\sigma^{\prime}C_2 \\
&=\sigma\sum_{\alpha=1}^3\left\{\delta_{\alpha}\delta_{\beta}\delta_{\gamma}+\left(\delta_{\gamma}\delta_{\alpha}+\delta_{\alpha}\delta_{\beta}\right)\delta_{\alpha}\right\}K_{\alpha}^2+2\sigma^{\prime}C_2 \\
&=\sigma\sum_{\alpha=1}^3\left(\delta_{\beta}\delta_{\gamma}+\delta_{\alpha}\delta_{\beta}+\delta_{\gamma}\delta_{\alpha}\right)\delta_{\alpha}K_{\alpha}^2+2\sigma^{\prime}C_2 \\
&=\sigma\sum_{\alpha=1}^3\left(\delta_1\delta_2+\delta_2\delta_3+\delta_3\delta_1\right)\delta_{\alpha}K_{\alpha}^2+2\sigma^{\prime}C_2 \\
&=\left(\delta_1\delta_2+\delta_2\delta_3+\delta_3\delta_1\right)\sigma C_1+2\sigma^{\prime}C_2=2\sigma^{\prime} 
\end{align*}
and similarly 
\begin{align*}
C_4
&=\sum_{\alpha=1}^3\left\{\sigma\delta_{\alpha}+\sigma^{\prime}+\sigma^2\delta_{\beta}\delta_{\gamma}\delta_{\alpha}^2+\sigma\sigma^{\prime}\left(\delta_{\beta}+\delta_{\gamma}\right)\delta_{\alpha}^2+\left(\sigma^{\prime}\right)^2\delta_{\alpha}^2\right\}K_{\alpha}^2\\
&=\sigma C_1+\left(\sigma^{\prime}\right)^2C_2+\sigma^2\delta_1\delta_2\delta_3\sum_{\alpha=1}^3\delta_{\alpha}K_{\alpha}^2+\sum_{\alpha=1}^3\left\{\sigma+\left(\delta_{\beta}+\delta_{\gamma}\right)\delta_{\alpha}^2\right\}K_{\alpha}^2 \\
&=\left\{2\sigma -\sigma\sigma^{\prime}\left(\delta_1\delta_2+\delta_2\delta_3+\delta_3\delta_1\right)\right\}C_1+\left(\sigma^{\prime}\right)^2C_2
=\left(\sigma^{\prime}\right)^2. 
\end{align*}
Here we used the condition $C_1=0$ and $C_2=1$. 
Therefore, we have $C_3^2-4C_4=0$. 

\section{Explicit computation of action-angle coordinates}

The Clebsch case displays many similarities with the Kowalevski top as there is a linearization on the Jacobian of a genus two curve. Indeed the explicit computation of the action-angle coordinates can be made quite similarly with that made for the Kowalevski top \cite{Fran} at least for the Weber case $(C_1=0)$.
In this paragraph, we return back to the real coordinates, since the action-angle coordinates are usually considered for real completely integrable Hamiltonian systems. 

We first include a detailed report of the general result proved in \cite{Fran}.

We consider an integrable Hamiltonian system with $m$ degrees of freedom, defined on a symplectic manifold $(V^{2m},\omega)$ of dimension $2m$ and where $\omega$ is the symplectic form, as the data of $m$ generically independent functions:
$$f_j:V^{2m}\to\mathbb{R}, j=1, \cdots, m,$$
such that $\{f_i,f_j\}=0$, where $\{\cdot,\cdot\}$ denotes the Poisson bracket associated with the symplectic form $\omega$.

For instance, the system that we have considered, in restriction to the $4$-dimensional symplectic leaf $V^4: C_1=0, C_2=1$ defines with the couple $C_3,C_4$ an integrable Hamiltonian system with $m=2$ degrees of freedom.

We assume furthermore that $f=(f_1, \ldots, f_m):V^{2m}\to \mathbb{R}^m$ is proper. Denote $D$ the critical locus of $f$, then for $c\in\mathbb{R}^m-D$, the connected components of $f^{-1}(c)$ are $m$-dimensional real tori.

Inspired by several examples (including the Kowalevski top), the following ``algebraic linearization" condition $(A)$ was introduced in \cite{Fran}:

\begin{enumerate}
\item 
There exists $m$ generically independent functions $(x_1, \ldots, x_m)$ defined on $V^{2m}$ so that, together with the collection $(f_1, \ldots, f_m)$ of first integrals, we obtain a system of coordinates on an open dense set of $V^{2m}-f^{-1}(D)$.
\item 
There is a family of hyperelliptic curves of genus $m$, $y^2=P(x,c)$, where $P$ is a polynomial in $x$ of degree $2m$ or $2m+1$, parameterized by regular values $c\in \mathbb{R}^m\setminus D$ of $\left(f_1, \ldots, f_m\right)$, so that
\begin{equation}
\label{A}
\sum_{k=1}^m \frac{x_k^{j-1}\{f_i,x_k\}}{\sqrt{P(x_k, c)}}=W_{ij},
\end{equation}
where the $m\times m$ matrix $W=\left(W_{ij}\right)$ is  invertible and constant.
\end{enumerate}
In \cite{Fran}, it was shown that the Kowalevski top satisfies the condition $(A)$ with the constant invertible matrix $W=\mathrm{diag}(1,2)$. 
We can easily check, from the above computations, that this condition is also verified in the Clebsch case, with $C_1=0$.

The following was proved in \cite{Fran}:
\begin{theorem}
\label{preparation of the symplectic form}
Assume that an integrable Hamiltonian system satisfies the condition $(A)$, then the symplectic form $\omega$ can be written as:
\begin{align*}
\omega&=\sum_{l=1}^m \eta_l\wedge \mathsf{d}f_l,\\
\eta_l&=\sum_{j=1}^m A_{jl}\mathsf{d}f_j+\sum_{j=1}^mB_l(x_j)\mathsf{d}x_j,\\
B_l(x)&=\sum_{k=1}^m\frac{(W^{-1})_{kl}x^{k-1}}{\sqrt{P(x,c)}}.
\end{align*}
\end{theorem}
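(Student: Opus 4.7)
The plan is to work in the local chart $(x_1,\ldots,x_m,f_1,\ldots,f_m)$ provided by condition $(A.1)$ on an open dense subset of $V^{2m}\setminus f^{-1}(D)$ and to identify the symplectic form by computing its contraction with the Hamiltonian vector fields $X_{f_i}$. Because $\{f_i,f_j\}=0$, each such vector field has no $\partial_{f_k}$-component:
\[
X_{f_i}=\sum_{k=1}^m \{f_i,x_k\}\,\partial_{x_k}.
\]
Denote $M_{ik}:=\{f_i,x_k\}$; this matrix is invertible on the chart since $(x_i,f_i)$ are local coordinates.

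The key algebraic observation is that condition $(A.2)$ is exactly the statement $MN=W$, with $N_{kj}:=x_k^{j-1}/\sqrt{P(x_k,c)}$ and $W$ constant and invertible. Inverting gives $M^{-1}=NW^{-1}$, so
\[
(M^{-1})_{kl}=\sum_{j=1}^m \frac{(W^{-1})_{jl}\,x_k^{j-1}}{\sqrt{P(x_k,c)}}=B_l(x_k).
\]
Setting $\eta_l^{(0)}:=\sum_j B_l(x_j)\,dx_j$, one then checks the duality $\eta_l^{(0)}(X_{f_i})=\sum_k B_l(x_k)M_{ik}=(MM^{-1})_{il}=\delta_{il}$.

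Next I would verify that $\omega-\sum_l \eta_l^{(0)}\wedge df_l$ is annihilated by every $X_{f_i}$: since $df_l(X_{f_i})=\{f_i,f_l\}=0$,
\[
\iota_{X_{f_i}}\!\Bigl(\sum_l \eta_l^{(0)}\wedge df_l\Bigr)=\sum_l \eta_l^{(0)}(X_{f_i})\,df_l=df_i=\iota_{X_{f_i}}\omega.
\]
A local calculation in $(x,f)$-coordinates shows that any $2$-form $\alpha$ killed by all $X_{f_i}$ has no $dx\wedge dx$ or $dx\wedge df$ terms: writing $\alpha$ in these coordinates and using $\iota_{X_{f_i}}(dx_k\wedge dx_l)=M_{ik}\,dx_l-M_{il}\,dx_k$ and $\iota_{X_{f_i}}(dx_k\wedge df_l)=M_{ik}\,df_l$, the coefficient matrices get left-multiplied by $M$, hence vanish by invertibility of $M$. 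Therefore
\[
\omega-\sum_l \eta_l^{(0)}\wedge df_l=\sum_{j,l}A_{jl}(x,f)\,df_j\wedge df_l
\]
for some functions $A_{jl}$, which can be rewritten $-\sum_l\bigl(\sum_j A_{jl}df_j\bigr)\wedge df_l$. Absorbing this correction into $\eta_l:=\eta_l^{(0)}+\sum_j A_{jl}\,df_j$ produces the claimed decomposition.

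The step I expect to require the most care is the identification of $M^{-1}$ with the single-variable hyperelliptic expressions $B_l(x_j)$: the structural content of the theorem is precisely that the inverse of the matrix $\{f_i,x_k\}$ has separated form, with its $(k,l)$-entry depending only on $x_k$ and the values of the first integrals through the radical $\sqrt{P(x_k,c)}$. Once this is recognized as the restatement of condition $(A.2)$, the remainder of the argument reduces to the linear-algebraic observation above and the gauge freedom carried by the $A_{jl}$-terms.
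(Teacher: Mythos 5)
The paper does not prove this theorem: it is quoted from \cite{Fran} with the remark ``the following was proved in \cite{Fran}'', so there is no internal argument to compare yours against. Your proof is correct and self-contained, and it is the natural (presumably the original) argument: condition $(A.2)$ is exactly the matrix identity $MN=W$ with $M_{ik}=\{f_i,x_k\}$ and $N_{kj}=x_k^{j-1}/\sqrt{P(x_k,c)}$, whence $(M^{-1})_{kl}=B_l(x_k)$ and $\eta_l^{(0)}(X_{f_i})=\delta_{il}$; the contraction computation then shows $\omega-\sum_l\eta_l^{(0)}\wedge \mathsf{d}f_l$ is killed by the $m$ independent fields $X_{f_i}=\sum_k M_{ik}\partial_{x_k}$, and your linear-algebra step correctly forces all $\mathsf{d}x\wedge \mathsf{d}x$ and $\mathsf{d}x\wedge \mathsf{d}f$ coefficients to vanish (left-multiplication by the invertible $M$), leaving a pure $\mathsf{d}f\wedge \mathsf{d}f$ remainder that is absorbed into the $A_{jl}$. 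Two minor points: the invertibility of $M$ is most directly read off from $MN=W$ itself (your appeal to $(x,f)$ being coordinates also works but tacitly uses nondegeneracy of $\omega$ and $\{f_i,f_j\}=0$); and the identity $\iota_{X_{f_i}}\omega=\mathsf{d}f_i$ versus $-\mathsf{d}f_i$ is a sign convention that at worst flips the sign of $B_l$ (equivalently replaces $W$ by $-W$), which is immaterial since the paper fixes no convention.
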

Consider a family of cycles $\gamma_j(c), (j=1, \cdots, m)$ which defines a system of generators of the homology of the torus $f^{-1}(c)$, where $c\in \mathbb{R}^m\setminus D$ are regular values. 
We introduce the so-called ``period matrix" $\Psi=\left(\Psi_{ij}\right)$ given by:
\begin{equation}
\Psi_{ij}(c)=\int_{\gamma_j(c)}\eta_i.
\end{equation}

As the $1$-forms $\eta_i$ are closed when restricted to the tori $f^{-1}(c)$, it is easy to check that the $\Psi_{ij}(c)$ depend only on the homology class of $\gamma_j(c)$ in $H_{1}(f^{-1}(c), \mathbb{Z})$. 
By the previous theorem, we obtain:
\begin{equation*}
\Psi_{ij}(c)=\int_{\gamma_j(c)}\sum_{l,k}\frac{x_l^{k-1}(W^{-1})_{ki}}{\sqrt{P(x_l,c)}}\mathsf{d}x_l.
\end{equation*}

It is known that the curves $\mathcal{C}_c: y^2=P(w,c)$ is mapped injectively into its Jacobian 
$\mathrm{Jac}(\mathcal{C}_c)=H^0(\mathcal{C}_c, \Omega^1_{\mathcal{C}_c})^*/H_1(\mathcal{C}_c, \mathbb{Z})$ via the Abel map $S^m(\mathcal{C}_c)\to \mathrm{Jac}(\mathcal{C}_c)$ from the symmetric product $S^m(\mathcal{C}_c)$ of $\mathcal{C}_c$ and that this injection is a quasi-isomorphism (induces an isomorphism at the level of homology). 
So we can assume that the generators $\gamma_j(c)$ are already given as paths on the curve $\mathcal{C}_c$. As consequence, we can suppress the index $l$ in the above formula and write:
\begin{equation}
\label{period matrix in case (A)}
\Psi_{ij}(c)=\int_{\gamma_j(c)}\sum_{k}\frac{x^{k-1}(W^{-1})_{ki}}{\sqrt{P(x,c)}}\mathsf{d}x.
\end{equation}
Note that the Riemann theorem on the usual period matrix of hyperelliptic surfaces yields the invertibility of the ``period matrix" $\Psi(c)$. 
Assume for simplicity that the symplectic form $\omega$ is exact, $\omega=\mathsf{d}\eta$, then the actions $a_j, j=1, \cdots, m$ are defined as functions of $c$ (or equivalently of the functions $f_1, \ldots, f_m$) by 
\begin{equation*}
a_j=f^*(a_j(c)), \qquad a_j(c)=\int_{\gamma_j(c)}\eta,
\end{equation*}
and thus the derivatives are given as 
\begin{equation}
\label{the actions}
\frac{\partial a_j}{\partial f_i}=\Psi_{ij}(c), \qquad i,j=1, \cdots, m.
\end{equation}

Clearly, the formula \eqref{period matrix in case (A)} can be applied to the Clebsch case (with $C_1=0$) that we have been studying. 
In that case, the supplementary coordinates $(x_1,x_2)$ are in involution with respect to the Poisson bracket $\{\cdot,\cdot\}$. 
In this case, we have $P(x,c)=\sqrt{(j_1-x)(j_2-x)(j_3-x)(x^2-C_3x+C_4)}$, whose three zeros $j_1, j_2, j_3$ are independent of $C_3, C_4$, which is very different for example from Kowalevski top. 
In the case $j_1<j_2<j_3$ we can choose the two generators $\gamma_1(c),\gamma_2(c)$ associated with the two paths lifted from the complex $x$-plane to the curve $\mathcal{C}_c$ obtained for instance as a path which turns around $j_1$ and $j_2$ and another path which encircles $j_3$ and $j_4$. The formula \eqref{period matrix in case (A)} yields now a very simple formula 
\begin{align*}
\Psi_{i1}&=\int_{j_1}^{j_2}\left\{\left(W^{-1}\right)_{1i}\frac{\mathsf{d}x}{P(x,c)}+\left(W^{-1}\right)_{2i}\frac{x\mathsf{d}x}{P(x,c)}\right\}, \\
\Psi_{i2}&=\int_{j_3}^{j_4}\left\{\left(W^{-1}\right)_{1i}\frac{\mathsf{d}x}{P(x,c)}+\left(W^{-1}\right)_{2i}\frac{x\mathsf{d}x}{P(x,c)}\right\},
\end{align*}
where the matrix $W$ is diagonal (cf. \eqref{diagonalization}): 
$W=
\begin{pmatrix}
-2 & 0 \\ 0 & 2
\end{pmatrix}$. 
Thus, we have 
\begin{align*}
\Psi_{11}&=\int_{j_1}^{j_2} -2\frac{x\mathsf{d}x}{R(x,c)}, \\
\Psi_{21}&=\int_{j_1}^{j_2} 2\frac{\mathsf{d}x}{R(x,c)}, \\
\Psi_{12}&=\int_{j_3}^{j_4} -2\frac{\mathsf{d}x}{R(x,c)}, \\
\Psi_{22}&=\int_{j_3}^{j_4} 2\frac{x\mathsf{d}x}{R(x,c)}. 
\end{align*}
The integration of this ``period matrix" is quite easy and it yields 
\begin{align*}
a_1=-2\int_{j_1}^{j_2}\sqrt{\frac{x^2-C_3x+C_4}{(x-j_1)(x-j_2)(x-j_3)}}\mathsf{d}x, \\
a_2=-2\int_{j_3}^{j_4}\sqrt{\frac{x^2-C_3x+C_4}{(x-j_1)(x-j_2)(x-j_3)}}\mathsf{d}x.
\end{align*}

\section{Conclusion and Perspectives}

As it has been shown in this article, the Clebsch case under Weber's condition (i.e. $C_1=0$) provides a quite remarkable example of the deep connections between integrable systems and algebraic geometry. 
By simple elimination of variables from the quadratic equations given by the constant of motions, it is possible to deduce naturally an associated Kummer surface. 
It is known in general that there always exists a double covering of a Kummer surface which is the Jacobian of an hyperelliptic curve of genus two. 
In the case of Clebsch top under Weber's condition ($C_1=0$), such a curve can be derived from the induced Hamiltonian Dynamics. 
It is an interesting perspective to further develop the analysis of the actions over the singularities of the system and the connection to a global aspects such as monodromy. 
Another important issue would be the extension to the general case ($C_1\neq 0$), as treated by K\"{o}tter \cite{kot}. 

\medskip

\noindent\textbf{Acknowledgment:} 
The second author thank Jun-ichi Matsuzawa for the stimulating and valuable discussion about Kummer surfaces.

\end{document}